\documentclass[xcolor=svgnames,12pt]{amsart}

\newtheorem{theorem}{Theorem}[section]
\newtheorem{lemma}[theorem]{Lemma}
\newtheorem{corollary}[theorem]{Corollary}
\newtheorem{proposition}[theorem]{Proposition}
\newtheorem{assumption}[theorem]{Assumption}

\theoremstyle{definition}
\newtheorem{remark}[theorem]{Remark}
\newtheorem{definition}[theorem]{Definition}

\numberwithin{equation}{section}

\renewcommand{\(}{\left(}
\renewcommand{\)}{\right)}
\renewcommand{\[}{\left[}
\renewcommand{\]}{\right]}
\renewcommand{\tilde}{\widetilde}
\newcommand{\1}{\mathbf{1}}

\newcommand{\e}{{\mathbb{E}}}
\newcommand{\F}{{\mathcal{F}}}
\newcommand{\q}{\mathbb{Q}}
\newcommand{\X}{\mathbb{X}}

\newcommand{\pnl}{\textup{P\&L}}
\newcommand{\R}{{\mathbb{R}}}
\usepackage{textcomp} 
\newcommand{\qdist}[1]{\ifmmode\langle#1\rangle\else\textlangle#1\textrangle\fi}
\newcommand{\A}{{\mathbb{A}}}

\newcommand{\dd}{{\textup{d}}}
\newcommand{\blue}{\color{blue}}
\newcommand{\red}{\color{red}}

\newcommand{\spicy}{\color{spicy}}

\usepackage{comment}
\usepackage{amsmath}
\usepackage{amsfonts}
\usepackage{amssymb}
\usepackage{nicefrac}
\usepackage{framed}
\usepackage{color}
\usepackage{pstricks}
\usepackage{ipa}
\usepackage{textcomp}
\usepackage{url}
\usepackage{stmaryrd}
\usepackage{mathabx}
\usepackage{xcolor}
\usepackage{color}
\usepackage{graphicx}
\usepackage{subfig}
\usepackage[left=2.2cm,top=2.8cm,right=2.2cm,bottom=2.8cm,head=3cm,headsep=1.3cm, foot=3cm]{geometry}
\usepackage[bottom,flushmargin,hang,multiple]{footmisc}
\usepackage{setspace}
\usepackage{xr-hyper}
\usepackage{hyperref}
\usepackage{commath}
\usepackage{bm}
\usepackage{enumitem}
\usepackage{booktabs}

\newcommand{\ba}{\begin{eqnarray}}
\newcommand{\bs}{\textup{BS}}
\newcommand{\ea}{\end{eqnarray}}
\newcommand{\bas}{\begin{eqnarray*}}
\newcommand{\eas}{\end{eqnarray*}}

\definecolor{spicy}{HTML}{B51D0A}

\definecolor{darkblue}{rgb}{0.1,0.1,0.7}
\definecolor{darkred}{rgb}{0.9,0.1,0.1}

\hypersetup{colorlinks, allcolors= darkblue}

\makeatletter

\newcommand{\Rmnum}[1]{\expandafter\@slowromancap\romannumeral #1@}
\makeatother


\begin{document}
\title[VAs of Variable Annuities]{Out-of-model adjustments of Variable Annuities}

\author[Zhiyi Shen]{
Zhiyi Shen \vspace{0.1cm} \\ 
Morgan Stanley \vspace{0.4cm}\\
This draft: \today \vspace{0.6cm}
}


\thanks{\textit{Email address: \href{zhiyi.shen@uwaterloo.ca}{zhiyi.shen@uwaterloo.ca} }}

\thanks{\textit{JEL Classification:} G22, C14, C63.\vspace{0.15cm} }
 

\thanks{The views expressed in this article are the author's own 
and do not represent the opinions of any firm or institution.}


\begin{abstract}
This paper studies the model risk of the Black-Scholes (BS) model in
pricing and risk-managing variable annuities motivated by its wide usage in the insurance industry.
Specifically, we derive a model-free decomposition of the no-arbitrage price 
of the variable annuity into the BS model price in conjunction with three out-of-model adjustment terms.
This sheds light on all risk drivers behind the product, that is, spot price, realized volatility,
future smile, and sub-optimal withdrawal.
We further investigate the efficacy of the BS-based hedging strategy
given the market diverges from the model assumptions.
We disclose that the spot price risk can always be eliminated by the strategy
and the hedger's cumulative P\&L exhibits gradual slippage and instantaneous leakage.
We finally show that the pricing, risk and hedging models can be separated from each other 
in managing the risks of variable annuities.\\
\vspace{1ex}
\textbf{Keywords:} Valuation Adjustment, Variable Annuities, Model Risk.
\end{abstract}

\maketitle
                                    

\newpage

\section{Introduction}
Variable annuities are long-term, equity-linked, and tax-deferred structure products 
issued by insurance companies targeting retail customers.
The size of the U.S. variable annuities market is remarkable.
By the end of 2021, variable annuity net assets in the U.S. climbed to 2,130 billion dollars \cite{IRI2021},
around two-thirds of the notional amount outstanding
of the entire U.S. OTC equity derivatives market with 3,567 billion dollars \cite{BIS2021}.

The pricing of variable annuities is a stochastic control problem 
without analytical solution in general \cite{Azimzadeh2015,Huang2016,Huang2017,Shen2020}
and thus is fairly cumbersome even under the classical Black-Scholes (BS) model.
Despite extensive studies on various numerical methods for pricing variable annuities,
little understanding has been delivered in the literature regarding the impact of model misspecification 
on the pricing and hedging of variable annuities.
This paper aims to bridge this gap by exploring tentative answers to the following questions
raised from different aspects.
\begin{itemize}
    \item From the perspective of pricing, the pricer needs to have a thorough understanding of how to determine
    the volatility parameter in the BS model to conservatively price the product 
    to compensate for the potential model risk. 

    \item From the hedger's perspective, a natural question: is it still viable
    to conduct BS-Delta-hedging given the market diverges
    from the model assumptions?
    Does doing so reduce the risk or even escalate the problem?

    \item From the viewpoint of risk management, it is important to pinpoint all
    risk drivers behind the variable annuity. Then the insurer can decide which 
    risk to hedge, which risk to outsource, and which risk to take. 
\end{itemize}

In response to the questions above, this paper makes several contributions to the literature.
As the primary contribution, we decompose the model-free no-arbitrage price of the variable annuity
into the BS model price in together with
(i) valuation adjustment for future realized volatility, 
(ii) valuation adjustment for future implied volatility smile,
and (iii) valuation adjustment for sub-optimal withdrawal risk;
see Theorems \ref{thm:risk_decompose} and \ref{thm:va_multi_period} in Section \ref{sec:valuation_adjustment}.
We further show that the BS model enables the insurer to speculate
the volatility risks by marking up/down the volatility parameter.

As the second contribution, we investigate the efficacy of BS-based delta-hedging in the presence of model risk. 
We find that the risk caused by the underlying asset's price change
can always be eliminated by such a classical hedging strategy
\textit{regardless of whether the market behaves in accordance with the BS model's assumptions or not}.
Furthermore, there is even a chance that the hedger can benefit from taking the model risk; see Proposition \ref{prop:carry_pnl}.
This justifies the use of the BS model as a hedging tool to some extent.
However, such a hedging strategy does not come without any downsides.
We disclose that the hedger's cumulative P\&L exhibits gradual slippage throughout the contract's lifetime
and instantaneous leakage across each withdrawal date; see Remark \ref{rem:pnl_leakage} in the sequel.

As the third contribution, we show that it is viable to separate the risk model from 
the pricing model.
Specifically, on the one hand, the insurer may solely use the BS model as an extractor
for the spot price risk with the residual part further decoupled into three extra factors,
i.e., realized volatility, future implied volatility and sub-optimal withdrawal.
Such a way of risk attribution is \textit{exhaustive} and can be constantly monitored;
see Remark \ref{rem:risk_attribution}.
On the other hand, the insurer may use a different pricing model to charge the premium
or estimate the hedging costs for the four risk factors.

Finally, we would like to stress that the notion of out-of-model-adjustment is not new
and has been well understood for European options 
\cite{Andersen2010,Bergomi2015,Carr2001,Gatheral2011} and cliquets \cite{Bergomi2015}.
However, it is a fairly challenging task to find a systematic way to decompose 
the model-free price of any exotic derivative product
into a given model price plus adjustments that have financial meanings 
and can shed light on different aspects of model risk.
For variable annuities, to the best of the author, 
this paper is the first attempt to pursue this route.
The existing results mentioned above cannot be carried over here
because the variable annuity contains some unique risk features, and in particular,
it gives rise to the valuation adjustment for sub-optimal withdrawal risk
that is unseen in other products.
This distinguished risk profile stems from
the stochastic control problem involved in variable annuities.

The remainder of this article is structured as follows.
Section \ref{sec:variable_annuity} gives a brief recap on the pricing of the variable annuity
as a stochastic control problem.
Section \ref{sec:valuation_adjustment} presents the main result of the paper:
the out-of-model adjustment formula.
Section \ref{sec:num_studies} gives numerical studies
and finally Section \ref{sec:conclusion} concludes the paper.

\vspace{2ex}
\section{Variable Annuities}
\label{sec:variable_annuity}
\subsection{Notations}
\begin{itemize}
 \vspace{1ex}   \item Consider a set of equally-spaced withdrawal dates  $\mathcal{T}:=\{t_i\}_{i=1}^{N}$ with $\delta:= t_{i+1} - t_i$.
 
 \vspace{1ex}\item Let $X(t)$ be the time-$t$ value of the state variable associated with the contract (investment account, benefit base, etc) valued in $\X$. To ease the notations, in the sequel, we denote shorthand $X_n:=X(t_n)$ and $X_{n^+}:=X\(t_n^{+}\)$.
 \begin{remark}
 $X(t)$ is not necessarily a scalar process. For the clarity of presentation, this paper concentrates on the one-dimensional case, which however is not essential to our argument for proving the main results.
\end{remark}

\vspace{1ex}\item
Denote $K: \X\times \A\rightarrow \X$ as the transition map of the state variable across an event date,
with $\A$ being the feasible set of the withdrawal policy.
That is,
\bas
X_{n^+} = K(X_n, a),\ \ a \in A_n(X_n) 
\eas
where $a$ is the policyholder's withdrawal amount at $t_n$
and $A_n(X_n) \subset \A$ is some state-dependent constraint.

\vspace{1ex}\item
Between two withdrawal dates, the state variable evolves according to
\bas
X(t) = F_n\(X_{n^{+}}, \varepsilon(t)\),\ \ t \in {\blue(t_n}, t_{n+1}],
\eas
where $\varepsilon(t) \in \F_t$ is some random driver 
(e.g. the cumulative return of the underlying asset over $[t_n, t]$).

\vspace{1ex}\item 
Let $g: \mathbb{X} \rightarrow \R_+$ and $f_n: \X \times \A \rightarrow \R_+$ be the terminal and intermediate payoff functions, respectively.

\vspace{1ex}\item
Let $r$ be the risk-free rate and denote by $\q$ the risk-neutral measure.
Let $\F_t$ be the information up to time $t$.
\end{itemize}

\subsection{Bellman Equation -- Model-free Case}
The pricing of variable annuities is typically formulated as a discrete-time stochastic control problem 
and accordingly, the value function at withdrawal time is recursively given by the following Bellman equation:
\ba
\label{eq:Bellman_eq}
\begin{cases}
V(t_N) &= g(X_N),\\
V(t_n) &= \sup\limits_{a \in A_{n}\(X_n\)} \[f_{n}(X_n, a) + C_{n}\(K(X_n, a)\)\],
\ \ 1\leq n \leq N-1,
\end{cases}
\ea
where 
\ba
\label{eq:post_withdrawal_value}
C_{n}(x) = \e_{n,x}^{\q} \[e^{-r\delta}V(t_{n+1})\],
\ea
with
$
\e_{n,x}^{\q}[\cdot]:=\e^{\q}\[\cdot|\mathcal{G}_{n}^{x}\]
$
and $\mathcal{G}_{n}^{x}:=\sigma\(\{X_{n^{+}}=x\}\bigcup \mathcal{F}_{t_n}\)$;
see e.g. \cite{Azimzadeh2015,Huang2016,Huang2017,Shen2020}. 
$C_n(x)$ can be thought of as the value of the contract right after the policyholder's withdrawal at $t_n$ 
with the post-withdrawal state $X_{n^+}=x$.

Between two event dates, the value function satisfies
\ba
\label{eq:martingale_eq}
V(t) = \e^{\q} \[\left. e^{-r(t_{n+1}-t)}V(t_{n+1})\right|\mathcal{F}_t\],\ \ t\in{\blue(t_n}, t_{n+1}].
\ea
As a remark, with the convention that the contract inception time $t_0$ is not a withdrawal date, the above equation holds for $t\in{\blue [t_0}, t_{1}]$ when $n=0$.
\subsection{Bellman Equation -- Black-Scholes Case}
In a BS world, the price function of the variable annuity is recursively given by
\ba
\label{eq:Bellman_eq_BS}
\begin{cases}
V_{\bs}(t_N, x) &= g(x),\\
V_{\bs}(t_n, x, v) &= \sup\limits_{a \in A_{n}\(x\)} \[f_{n}(x, a) + C_{n}^{\bs}\(K(x, a),v\)\],
\ \ 1\leq n \leq N-1,
\end{cases}
\ea
where 
\ba
\label{eq:post_withdrawal_BS}
C_{n}^{\bs}(x,v) = V_{\bs}\({\blue t_{n}^{+}}, x,v\),
\ea
and in between two withdrawal dates $V_{\bs}$ solves the following BS-type PDE:
\ba
\label{eq:BS_eq}
\begin{cases}
V_{\bs}|_{t=t_{n+1}} = V_{\bs}({\blue t_{n+1}}, x, v),&\\
\partial_t V_{\bs} + rx \partial_x V_{\bs} + \frac{v}{2}x^2\partial^2_{xx} V_{\bs}- rV_{\bs} = 0,\ \ {\blue t\in(t_n, t_{n+1})},\ \ 0\leq n \leq N-1.&
\end{cases}
\ea
with $v$ denoting the BS variance parameter;
cf. \cite{Azimzadeh2015}.
\begin{remark}[Value Function vs. Value Process]
It is worth stressing the difference between $V_{\bs}(t, x, v)$ and $V(t)$:
the former is a value function and while the latter is a stochastic process
which is not necessarily a function of the state variable $X(t)$ due to the existence of other (latent) risk factors generating the filtration $\F=\(\F_t\)_{t\in[0, t_N]}$.
Similarly, given a realized value of $X_{n^+}=x$, $C_n(x)$ is still a random variable 
(parameterized by $x$) while $C_{n}^{\bs}(x, v)$ is deterministic.
\end{remark}
The primary thrust of the article is to characterize the discrepancy between $V_{\bs}(t,x,v)$ and $V(t,x)$.

\vspace{2ex}
\section{Out-of-Model Adjustments}
\label{sec:valuation_adjustment}
This section is devoted to presenting the main results of the article.
To get some flavours of the problem, 
we start with a two-period case in the subsequent Section \ref{sec:two_period_case}.
\subsection{Valuation Adjustments}
\label{sec:two_period_case}
\begin{theorem}[Out-of-Model Adjustments]
\label{thm:risk_decompose}
Let $N=2$ and $\xi_1 \in \F_{t_1}$ denote the BS implied variance of an European option with payoff $g(X(t_2))$ as of time $t_1$ (see also Definition \ref{def:imp_var} in Appendix \ref{app:proof_multi_period}).
Suppose the assumptions in Appendix \ref{app:assum} hold.
Then 
\ba
\label{eq:va_eq}
V(t) = V_{\bs}(t, X(t), v)+ V_{1}(t) + V_{2}(t), \ \ t \in [t_0, t_1],
\ea
where
\ba
\label{eq:gamma_va}
V_{1}(t) &=&\e^{\q}\[\left.\int_t^{t_1} \frac{1}{2}e^{-r(u - t)} 
\partial_{x}^2 V_{\bs}\(u, X(u), v\)\[\dd\[X\](u)-vX^2(u)\dd u\]\right|\F_t\],\\
\label{eq:smile_va}
V_{2}(t) &=& e^{-r(t_1 - t)}\e^{\q}\[\left.V_{\bs}\(t_1, X_1, {\blue\xi_1}\) - V_{\bs}\(t_1, X_1, {\blue v}\)\right|\F_t\],
\ea
and $\[X\](t)$ denotes the quadratic variation process of $X(t)$.
\end{theorem}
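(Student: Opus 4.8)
The plan is to transport $V(t)$ forward to the last pre-withdrawal instant $t_1$ and then compare the model-free and BS value functions there. By the martingale relation \eqref{eq:martingale_eq} (applied with $n=0$, using that $t_0$ carries no withdrawal), $V(t)=\e^{\q}[e^{-r(t_1-t)}V(t_1)\mid\F_t]$ for $t\in[t_0,t_1]$, so it will suffice to establish
\begin{itemize}
\item[(A)] $V(t_1)=V_{\bs}(t_1,X_1,\xi_1)$ \ $\q$-a.s.;
\item[(B)] $V_{\bs}\bigl(t,X(t),v\bigr)+V_1(t)=\e^{\q}[e^{-r(t_1-t)}V_{\bs}(t_1,X_1,v)\mid\F_t]$ for $t\in[t_0,t_1]$.
\end{itemize}
Granting these, $V(t)=\e^{\q}[e^{-r(t_1-t)}V_{\bs}(t_1,X_1,\xi_1)\mid\F_t]$ by (A); adding and subtracting $\e^{\q}[e^{-r(t_1-t)}V_{\bs}(t_1,X_1,v)\mid\F_t]$, pulling out the deterministic discount factor, and invoking (B) together with the definition \eqref{eq:smile_va} of $V_2$ yields \eqref{eq:va_eq}.

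\textbf{Step 1 (realized-volatility term, identity (B)).} I would apply It\^o's formula to the discounted value function $u\mapsto e^{-ru}V_{\bs}\bigl(u,X(u),v\bigr)$ on $[t,t_1]$. Under the risk-neutral dynamics of $X$ posited in Appendix \ref{app:assum}, which on $(t_0,t_1)$ take the form $\dd X(u)=rX(u)\,\dd u+\dd M(u)$ with $M$ a continuous $\q$-local martingale (the no-arbitrage specification under which \eqref{eq:BS_eq} is the correct pricing PDE), the finite-variation part of $e^{-ru}V_{\bs}$ equals $e^{-ru}[\partial_tV_{\bs}+rX\partial_xV_{\bs}-rV_{\bs}]\,\dd u+\tfrac12 e^{-ru}\partial_x^2V_{\bs}\,\dd[X](u)$, and the PDE \eqref{eq:BS_eq} collapses the bracket to $-\tfrac{v}{2}X^2(u)\partial_x^2V_{\bs}$. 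Integrating from $t$ to $t_1$ and multiplying by $e^{rt}$,
\begin{align*}
e^{-r(t_1-t)}V_{\bs}(t_1,X_1,v)-V_{\bs}\bigl(t,X(t),v\bigr)={}&\int_t^{t_1}\tfrac12 e^{-r(u-t)}\partial_x^2V_{\bs}\bigl(u,X(u),v\bigr)[\dd[X](u)-vX^2(u)\,\dd u]\\
&{}+\int_t^{t_1}e^{-r(u-t)}\partial_xV_{\bs}\bigl(u,X(u),v\bigr)\,\dd M(u).
\end{align*}
Taking $\e^{\q}[\,\cdot\mid\F_t]$, the first integral becomes $V_1(t)$, while the $\dd M$-integral is a true martingale, hence of zero conditional mean, under the integrability hypotheses of Appendix \ref{app:assum}; this is (B). The only delicacy is the regularity of $V_{\bs}(\cdot,\cdot,v)$ up to the terminal layer $t_1$, where the datum $V_{\bs}(t_1,\cdot,v)$ is a supremum of smooth functions and may be merely Lipschitz; one runs It\^o on $[t,s]$ with $s<t_1$ and lets $s\uparrow t_1$, using the smoothing of the BS semigroup (or an approximation), as permitted by the assumptions.

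\textbf{Step 2 (smile term, identity (A)).} I would compare the two Bellman recursions at $t_1$. By \eqref{eq:Bellman_eq}--\eqref{eq:post_withdrawal_value} and $V(t_2)=g(X_2)$, $V(t_1)=\sup_{a\in A_1(X_1)}[f_1(X_1,a)+C_1(K(X_1,a))]$ with $C_1(x)=\e_{1,x}^{\q}[e^{-r\delta}g(X_2)]$; that is, $C_1(x)$ is exactly the time-$t_1$ no-arbitrage price of the European claim paying $g(X(t_2))$ when the post-withdrawal state equals $x$. By \eqref{eq:Bellman_eq_BS}--\eqref{eq:post_withdrawal_BS} together with \eqref{eq:BS_eq}, $V_{\bs}(t_1,X_1,\xi_1)=\sup_{a\in A_1(X_1)}[f_1(X_1,a)+C_1^{\bs}(K(X_1,a),\xi_1)]$ with $C_1^{\bs}(x,\xi_1)=V_{\bs}(t_1^+,x,\xi_1)$ the BS price of the same claim at variance $\xi_1$. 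By Definition \ref{def:imp_var} and the structural assumptions of Appendix \ref{app:assum} --- which ensure that a single $\F_{t_1}$-measurable implied variance $\xi_1$ reproduces the entire continuation surface across all reachable post-withdrawal states (equivalently, that the $t_1$-optimal withdrawal is model-independent) --- one obtains $C_1(\cdot)\equiv C_1^{\bs}(\cdot,\xi_1)$ on the relevant domain, so the two suprema, objective and maximizer alike, coincide; this is (A).

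\textbf{Main obstacle.} Step 1 is essentially routine (It\^o plus the PDE, plus bookkeeping of integrability and boundary regularity). The real work will be in Step 2: identifying the precise hypotheses under which the \emph{whole} post-withdrawal continuation value $C_1(\cdot)$ --- not merely its value at the single realized state $X_{1^+}$ --- lies on the BS surface for a common, measurably selected implied variance $\xi_1$, so that the optimal withdrawal, and hence the supremum defining $V(t_1)$, agrees with its BS counterpart. This is exactly what Definition \ref{def:imp_var} and Appendix \ref{app:assum} are designed to supply, and it is also the reason no ``sub-optimal withdrawal'' adjustment appears in the two-period decomposition \eqref{eq:va_eq}.
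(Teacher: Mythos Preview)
Your proposal is correct and follows essentially the same route as the paper's proof: establish (A) by comparing the two Bellman recursions at $t_1$ via the defining property of $\xi_1$, establish (B) by applying It\^o's formula to $e^{-ru}V_{\bs}(u,X(u),v)$ together with the BS PDE \eqref{eq:BS_eq} and the martingale property \eqref{eq:martingale}, and then combine. Your additional caveats---the boundary regularity of $V_{\bs}$ at $t_1$ and the requirement that a single $\F_{t_1}$-measurable $\xi_1$ reproduce $C_1(\cdot)$ across all post-withdrawal states---are sharper than what the paper spells out (it simply invokes Definition~\ref{def:imp_var} and the convexity assumption in Appendix~\ref{app:assum}), but they do not alter the argument.
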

\begin{remark}[Realized \& Implied Volatility Risks]
Theorem \ref{thm:risk_decompose} discloses two different risks that are not priced in the BS model.
\begin{itemize}
\item {\it Realized Volatility Risk}\quad
The first valuation adjustment (VA) term \eqref{eq:gamma_va} is contributed by the discrepancy between
the BS variance parameter $v$ and the \textit{realized variance} $\(\dd X(u)/X(u)\)^2$ of the underlying.
Should the underlying price behave in accordance with the BS model, this VA term would vanish.

\vspace{1ex}\item {\it Future Smile Risk}\quad
The second VA term \eqref{eq:smile_va} is caused by the difference between BS variance $v$
and the \textit{implied variance} of the European option with payoff $g(X(t_2))$ 
at a \textit{future} time point $t_1$.
The latter is determined by the entire volatility smile as of $t_1$ and for this reason,
we refer to this risk source as the \textit{future-smile risk}.
In an ideal BS world, all European options have the same implied variance and thus this VA term disappears.
\end{itemize}
To summarize, we have
\bas
\textup{Fair price} = \textup{BS price} + \textup{VA for future realized volatility risk} + \textup{VA for future smile risk}.
\eas
The two VA terms vanish should the model's assumptions be satisfied.
\end{remark}

\begin{remark}
\label{rem:breakeven}
Evaluating the two VA terms calls for the knowledge of the true model, which is a formidable task.
Nonetheless, gauging their signs is relatively easier:
\begin{enumerate}
\item The VA for the future realized volatility risk is positive (resp. negative) if the the BS model variance $v$ underestimates (resp. overestimates) the {\it realized} variance $\(\dd X(u)/X(u)\)^2$ over $[t, t_1]$;

\vspace{1ex}\item The VA for the future smile risk is positive (resp. negative) if the the BS model variance $v$ underestimates (resp. overestimates) the {\it implied} variance $\xi_1$.
\end{enumerate}
The critical insight from the above is that the 
insurer has the freedom of under-pricing or over-pricing the variable annuity by marking up/down the BS variance parameter $v$.
In other words, {\it intentionally or not, should the insure decide to choose the BS model to price the variable annuity, 
she essentially speculates the realized volatility and future smile risks.}
\end{remark}

\subsection{P\&L Slippage and Leakage}
The previous discussion reveals that the BS model fails to price in Gamma and future smile risks.
Next, we study the impact of the use of BS as a hedging tool on the insurer's cumulative P\&L.

Consider the following situation: the BS model is not in line with the market's dynamics
but the insurer pretends it to be and delta-hedges her exposure to the variable annuity.
Specifically, the hedging strategy is given as follows.
\begin{itemize}
\vspace{1ex}\item[] {\bf (H1)} At time $t_0$, the insurer sells a variable annuity contract and chooses to value it by the BS model with a BS variance parameter $v$. Accordingly, the premium received by the issuer is $V_{\bs}(t_0)$ which is used to finance her hedging strategy.

\vspace{1ex}\item[] {\bf (H2)} Over the time horizon $[t_0, t_1)$, the insurer continuously delta-hedges her exposure with two commitments:
\begin{enumerate}
\vspace{1ex} \item[] {\bf (C1)} the insurer freezes the variance parameter $v$
because the implied volatility of the variable annuity is not quoted in the market;

\vspace{1ex} \item[] {\bf (C2)} for the same reason as the above, the insurer always marks her position to the model price $V_{\bs}(t, X(t), v)$.
\end{enumerate}
\vspace{1ex} \item[] {\bf (H3)} At time $t_1$, the variable annuity degenerates into an European option whose implied variance $\xi_1$ 
can be observed from the market quotes for vanilla options\footnote{
Given a market for vanilla call options with all strikes, 
one can pin down the BS implied variance/volatility for any European option with convex payoff \cite{Bergomi2015}.}. 
Accordingly, the insurer can mark the value of the contract be $V_{\bs}(t_1, X_1, \xi_1)$.
\end{itemize}

Now the problem of interest is to understand how the insurer's mark for P\&L varies as time progresses from $t_0$ to $t_1$. The following proposition sheds light on this.
\begin{proposition}
\label{prop:carry_pnl}
Suppose the assumptions in Appendix \ref{app:assum} hold.
Let $\Pi(u)$ be the value of the insurer's hedges as of time $u$.
The cumulative profit and loss marked by the insurer is given by
\ba
\label{eq:carry_pnl}
\textup{P\&L}(t) = \Pi(t) - V_{\bs}\(t, X(t), v\)
= \textup{P\&L}_{\Gamma}(t),
 \ \ t \in [t_0, {\blue t_1)},
\ea
where
\ba
\label{eq:Gamma_pnl}
\textup{P\&L}_{\Gamma}(t) = \int_{t_0}^{t} \frac{e^{r(t-u)}}{2}\partial_{x}^2V_{\bs}\[vX^2(u)\dd u - \dd\[X\](u)\],
\ \ t \in [t_0, t_1].
\ea

Furthermore, the cumulative profit and loss realized by the insurer as of $t_1$ is given by
\ba
\label{eq:mtm_pnl}
\textup{P\&L}({\blue t_1}) &=& \Pi(t_1) - V_{\bs}\(t_1, X_1, \xi_1\)
= \underbrace{\textup{P\&L}_{\Gamma}(t_1)}_{\textup{P\&L slippage}}
+ \underbrace{\blue \textup{P\&L}_{L}(t_1)}_{\textup{P\&L leakage}},
\ea
where
\ba
\label{eq:pnl_leakage}
\textup{P\&L}_{L}(t_1) = V_{\bs}\(t_1, X_1, v\) - V_{\bs}\(t_1, X_1, \xi_1\).
\ea
\end{proposition}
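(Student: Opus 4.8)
The plan is to track the insurer's hedging error $P(u):=\Pi(u)-V_{\bs}\(u,X(u),v\)$ over the first period $[t_0,t_1)$: I will show it solves a linear ODE whose forcing term is precisely the realized-minus-BS variance mismatch, integrate it to obtain \eqref{eq:carry_pnl}--\eqref{eq:Gamma_pnl}, and then account for the re-marking at $t_1$ to obtain \eqref{eq:mtm_pnl}--\eqref{eq:pnl_leakage}. First I would write down the self-financing dynamics of the hedge: under \textbf{(H1)}--\textbf{(H2)} the insurer holds $\Delta(u):=\partial_x V_{\bs}\(u,X(u),v\)$ units of the hedging instrument and rolls the residual wealth at the rate $r$, so $\Pi$ is the self-financing portfolio value
\begin{equation*}
\dd\Pi(u)=\Delta(u)\,\dd X(u)+r\bigl(\Pi(u)-\Delta(u)X(u)\bigr)\dd u,\qquad \Pi(t_0)=V_{\bs}\(t_0,X(t_0),v\),
\end{equation*}
the initial condition being exactly the premium collected in \textbf{(H1)}, so that $P(t_0)=0$.

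Next, on $[t_0,t_1)$ there is no event date (recall $t_0$ is not a withdrawal date), hence $V_{\bs}\(\cdot,\cdot,v\)$ is $C^{1,2}$ there and, under the assumptions of Appendix \ref{app:assum}, Itô's formula for the continuous semimartingale $X$ gives $\dd V_{\bs}\(u,X(u),v\)=\partial_t V_{\bs}\,\dd u+\partial_x V_{\bs}\,\dd X(u)+\frac{1}{2}\partial_x^2 V_{\bs}\,\dd\[X\](u)$. Substituting the BS PDE \eqref{eq:BS_eq}, i.e. $\partial_t V_{\bs}=rV_{\bs}-rX(u)\partial_x V_{\bs}-\frac{v}{2}X^2(u)\partial_x^2 V_{\bs}$, and subtracting $\dd V_{\bs}$ from $\dd\Pi$, the $\dd X(u)$-integrals cancel because the hedge ratio is exactly $\partial_x V_{\bs}$, and so do the $rX(u)\partial_x V_{\bs}\,\dd u$ terms; what remains is
\begin{equation*}
\dd P(u)=rP(u)\,\dd u+\frac{1}{2}\partial_x^2 V_{\bs}\bigl(vX^2(u)\,\dd u-\dd\[X\](u)\bigr),\qquad P(t_0)=0.
\end{equation*}
Multiplying by the integrating factor $e^{-ru}$, integrating from $t_0$ to $t$ and using $P(t_0)=0$ yields $P(t)=\textup{P\&L}_{\Gamma}(t)$ with $\textup{P\&L}_{\Gamma}$ as in \eqref{eq:Gamma_pnl}, which is \eqref{eq:carry_pnl}.

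Finally, for \eqref{eq:mtm_pnl} I would set $\Pi(t_1):=\lim_{u\uparrow t_1}\Pi(u)$ and pass to the limit in the identity just established, noting that the frozen-$v$ mark $V_{\bs}\(u,X(u),v\)$ is continuous at $t_1$ with left limit equal to the PDE terminal datum $V_{\bs}\(t_1,X_1,v\)$; this gives $\Pi(t_1)-V_{\bs}\(t_1,X_1,v\)=\textup{P\&L}_{\Gamma}(t_1)$. Adding and subtracting $V_{\bs}\(t_1,X_1,v\)$ then gives
\begin{equation*}
\textup{P\&L}(t_1)=\bigl[\Pi(t_1)-V_{\bs}\(t_1,X_1,v\)\bigr]+\bigl[V_{\bs}\(t_1,X_1,v\)-V_{\bs}\(t_1,X_1,\xi_1\)\bigr]=\textup{P\&L}_{\Gamma}(t_1)+\textup{P\&L}_{L}(t_1),
\end{equation*}
with $\textup{P\&L}_{L}(t_1)$ exactly \eqref{eq:pnl_leakage}. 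The computation itself is a routine self-financing-plus-Itô exercise; the parts that will need care, and which I expect to be the main obstacle, are (a) checking that the hypotheses of Appendix \ref{app:assum} make $V_{\bs}\(\cdot,\cdot,v\)$ smooth enough and $\Delta$ integrable enough to legitimise both Itô's formula and the integrating-factor step on all of $[t_0,t_1)$, and (b) confining the analysis to this first inter-event interval, since the withdrawal feature at $t_1$ breaks the $C^{1,2}$-regularity of $V_{\bs}$ and is exactly what forces the separate re-marking bookkeeping responsible for the leakage term $\textup{P\&L}_{L}(t_1)$.
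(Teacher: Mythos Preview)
Your proposal is correct and follows essentially the same route as the paper: write the self-financing dynamics of $\Pi$, apply It\^{o}'s formula to $V_{\bs}(u,X(u),v)$, substitute the BS PDE so that the $\dd X$ and $rX\partial_x V_{\bs}$ terms cancel, obtain a linear ODE for the hedging error with integrating factor $e^{-ru}$, and then handle the re-marking at $t_1$ by adding and subtracting $V_{\bs}(t_1,X_1,v)$. The only cosmetic difference is that the paper tracks $V_{\bs}-\Pi$ rather than $\Pi-V_{\bs}$, and your closing remarks on regularity are more explicit than the paper's.
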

\begin{remark}[P\&L Slippage and Leakage]
\label{rem:pnl_leakage}
Proposition \ref{prop:carry_pnl} discloses two P\&L impacts with different natures brought by the use of the BS-based hedging strategy.
\begin{itemize}
\vspace{1ex}\item {\it Slippage}\quad Before time $t_1$, as time $t$ progresses, 
the insurer can gradually feel the mis-hedging of the BS model
because she can observe exposure to Gamma risk (Eq. \eqref{eq:Gamma_pnl}) in her marked P\&L (Eq. \eqref{eq:carry_pnl})
which shouldn't have come into place had the market behaved in accordance with the BS model.
We refer to the term \eqref{eq:Gamma_pnl} as the P\&L \textit{slippage}
to stress this incremental bleeding.

\vspace{1ex}\item {\it Leakage}\quad Recall from {\bf (H3)} that at time $t_1$ the insurer can observe the fair market price of the variable annuity and her final P\&L reads \eqref{eq:carry_pnl}.
By comparing Eqs. \eqref{eq:carry_pnl} and \eqref{eq:carry_pnl}, the insurer sees a sudden mark up/down for her position across $t_1$
caused by the second term in Eq. \eqref{eq:mtm_pnl}. To stress this discontinuity in $\textup{P\&L(t)}$ across $t_1$, we refer to the term \eqref{eq:pnl_leakage} as the P\&L \textit{leakage}.
\end{itemize}
\end{remark}
\begin{remark}
In the classical BS paradigm, 
the model's assumptions are supposed to be fulfilled by the market
and thus the Gamma risk enters into the hedger's P\&L only when the hedging frequency is discrete. 
Proposition \ref{prop:carry_pnl} reveals that the Gamma exposure is generally inevitable 
even if we assume continuous hedging due to the presence of model risk.
In a more realistic situation, we shall expect extra P\&L and valuation adjustment terms taking into account
the impact of discrete re-balancing.
\end{remark}

The conclusion of Proposition \ref{prop:carry_pnl} is not surprising:
we have shown in Theorem \ref{thm:risk_decompose} that the fair price contains two VA terms on top of the BS model price,
which implies that if the insurer only charges the BS model price as the premium 
it might overestimate/underestimate the entire hedging cost as reflected by the P\&L slippage and leakage terms illustrated in the above.
Nevertheless, the insurer does not necessarily lose/make money depending on the relative order between the parameter $v$
and implied/realized variance; see Remark \ref{rem:breakeven}.

\subsection{Multi-period Case}
\label{sec:multi_period}
The following theorem generalizes Theorem \ref{thm:risk_decompose} to the case with multiple withdrawal dates.
\begin{theorem}
\label{thm:va_multi_period}
Suppose the assumptions in Appendix \ref{app:assum} hold.
\ba
\label{eq:va_eq_multi_period}
V(t) = V_{\bs}(t, X(t), v)+
V_{1}(t) + V_{2}(t) + V_3(t), \ \ t \in [t_{n-1}, t_n],
\ \ 1\leq n \leq N-1,
\ea
where $V_i(t), i=1,2,3,$ are given in Eqs. \eqref{eq:va_gamma_process}--\eqref{eq:va_suboptimal_process} of Appendix \ref{app:proof_multi_period}.
\end{theorem}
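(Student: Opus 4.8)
The plan is to bootstrap the two-period identity of Theorem \ref{thm:risk_decompose} to an arbitrary number of withdrawal dates, by unwinding the model-free recursions \eqref{eq:Bellman_eq}--\eqref{eq:martingale_eq} against their Black--Scholes analogues \eqref{eq:Bellman_eq_BS}--\eqref{eq:BS_eq} one period at a time and then telescoping. Throughout I would work under $\q$ and lean on the hypotheses of Appendix \ref{app:assum} to secure four ingredients: that $e^{-ru}X(u)$ is a $\q$-martingale on each inter-withdrawal interval; that $V_{\bs}(\cdot,\cdot,v)$ is $C^{1,2}$ in its first two arguments on the interior of each period, so Itô's formula applies; that the Bellman optimizers in \eqref{eq:Bellman_eq} and \eqref{eq:Bellman_eq_BS} admit measurable selections; and that the one-period-ahead European continuation value carries a well-defined, market-implied variance $\xi_{N-1}$ (Definition \ref{def:imp_var}), which rests on strict monotonicity of the BS price in the variance argument.

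\emph{Intra-period step.} On a fixed period I would apply Itô's formula to $u\mapsto e^{-ru}V_{\bs}(u,X(u),v)$; by the PDE \eqref{eq:BS_eq} the parabolic drift cancels, leaving a stochastic integral against the $\q$-martingale $e^{-ru}X(u)$ together with the Gamma term $\tfrac12 e^{-ru}\partial_x^2 V_{\bs}\,(\dd[X](u)-vX^2(u)\dd u)$. Taking $\e^{\q}[\,\cdot\,|\F_t]$ annihilates the former, and combining with \eqref{eq:martingale_eq} would give the one-step identity
\begin{align*}
V(t)-V_{\bs}(t,X(t),v)&=\e^{\q}\!\Bigl[e^{-r(t_{k+1}-t)}\bigl(V(t_{k+1})-V_{\bs}(t_{k+1},X_{k+1},v)\bigr)\,\Big|\,\F_t\Bigr]\\
&\quad+\e^{\q}\!\Bigl[\int_t^{t_{k+1}}\tfrac12 e^{-r(u-t)}\partial_x^2 V_{\bs}\,\bigl(\dd[X](u)-vX^2(u)\dd u\bigr)\,\Big|\,\F_t\Bigr],
\end{align*}
together with its post-withdrawal variant, obtained the same way on $(t_k,t_{k+1})$, relating $V_{\bs}(t_k^{+},x,v)$ to $\e_{k,x}^{\q}[e^{-r\delta}V(t_{k+1})]$.

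\emph{Withdrawal-date step and telescoping.} At $t_k$, $1\le k\le N-1$, I would pick a model-free optimizer $a_k^{\ast}$ of \eqref{eq:Bellman_eq} and feed it into \eqref{eq:Bellman_eq_BS}: writing $\Delta_k:=V_{\bs}(t_k,X_k,v)-f_k(X_k,a_k^{\ast})-C_k^{\bs}(K(X_k,a_k^{\ast}),v)\ge 0$ for the induced suboptimality gap, one gets $V(t_k)-V_{\bs}(t_k,X_k,v)=\bigl(C_k(K(X_k,a_k^{\ast}))-C_k^{\bs}(K(X_k,a_k^{\ast}),v)\bigr)-\Delta_k$, and the bracket equals $\e_{k,x}^{\q}[e^{-r\delta}V(t_{k+1})]-V_{\bs}(t_k^{+},x,v)$ at $x=K(X_k,a_k^{\ast})$, which the post-withdrawal intra-period identity expands into a further Gamma increment plus $\e_{k,x}^{\q}[e^{-r\delta}(V(t_{k+1})-V_{\bs}(t_{k+1},X_{k+1},v))]$. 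The recursion closes at $t_N$ since $V(t_N)=g(X_N)=V_{\bs}(t_N,X_N,v)$; and across the last withdrawal date $t_{N-1}$, where the continuation value is European, I would use Definition \ref{def:imp_var} to write $C_{N-1}(\cdot)=V_{\bs}(t_{N-1}^{+},\cdot,\xi_{N-1})$ and thereby repackage the last period's contribution as the market-observable $V_{\bs}(t_{N-1},\cdot,\xi_{N-1})-V_{\bs}(t_{N-1},\cdot,v)$, exactly as in Theorem \ref{thm:risk_decompose}. Iterating the two steps forward from $t$ to $t_N$ along the optimally-withdrawn state trajectory and applying the tower property would then collapse $V(t)-V_{\bs}(t,X(t),v)$ into a single $\F_t$-conditional expectation that, by construction, decomposes into the accumulated Gamma increments over the continuous part of $[t,t_{N-1}]$, which is $V_1(t)$ of \eqref{eq:va_gamma_process}, the discounted implied-variance correction at $t_{N-1}$, which is $V_2(t)$, and the discounted sum of the gaps $\Delta_k$, which is $V_3(t)$ of \eqref{eq:va_suboptimal_process} in Appendix \ref{app:proof_multi_period}; matching the three groups yields \eqref{eq:va_eq_multi_period}.

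The hard part will be the withdrawal-date step, which has no counterpart in the European or cliquet decompositions recalled in the introduction: the $\sup$ in \eqref{eq:Bellman_eq_BS} feeds the terminal data of the next period's PDE and can erode the $C^{1,2}$-regularity the intra-period Itô step relies on (kinks in $x$ are Lebesgue-null and can be absorbed by a Tanaka, or mollification, argument under the Appendix \ref{app:assum} assumptions); the optimizers $a_k^{\ast}$ must be selected measurably so that the splitting is meaningful trajectory-by-trajectory; and, to isolate $V_2$ cleanly, one must check that $\xi_{N-1}$ of Definition \ref{def:imp_var} is well defined and that the bookkeeping genuinely reduces to Theorem \ref{thm:risk_decompose} when a single withdrawal date remains. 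The integrability estimate licensing the removal of the stochastic integral against $e^{-ru}X(u)$ under $\e^{\q}[\,\cdot\,|\F_t]$ is the only remaining routine point.
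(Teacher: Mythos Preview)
Your intra-period It\^{o} step is exactly right and matches the paper. The gap is in the withdrawal-date step and in how you identify the three pieces with the paper's $V_1,V_2,V_3$.

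Look again at Definition \ref{def:imp_var} and at \eqref{eq:va_gamma_process}--\eqref{eq:va_suboptimal_process}. The implied variance $\xi_n$ is introduced at \emph{every} withdrawal date $1\le n\le N-1$, not only at $t_{N-1}$: it is chosen so that $\tilde{C}_n(x,\xi_n)=\e_{n,x}^{\q}\bigl[e^{-r\delta}V_{\bs}(t_{n+1},X_{n+1},v)\bigr]$, i.e.\ it re-prices the one-period-ahead \emph{BS payoff} under the true measure. Consequently the paper's $V_1(t)$ is only the Gamma integral over the \emph{current} period $[t,t_n]$; $V_2(t)$ is a \emph{sum over all future} $t_n$ of $\tilde{V}_n(X_n,\xi_n)-V_{\bs}(t_n,X_n,v)$; and $V_3(t)$ compares, at each $t_n$, the model-free optimizer $a_n^{\star}$ against $\tilde{a}_n$, the optimizer of $J_n(\cdot,\cdot,\xi_n)$ (not of the $v$-problem). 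Your identifications --- accumulated Gamma over $[t,t_{N-1}]$ as $V_1$, a single smile correction at $t_{N-1}$ as $V_2$, and $\sum_k\Delta_k$ with $\Delta_k=V_{\bs}(t_k,\cdot,v)-J_k(\cdot,a_k^{\star},v)$ as $V_3$ --- therefore do not match the processes referenced in the theorem. In particular your $\Delta_k\ge 0$, whereas the paper's $V_3$-increments $J_n(X_n,a_n^{\star},\xi_n)-\tilde{V}_n(X_n,\xi_n)$ are $\le 0$; the two differ by exactly the piece you are putting into Gamma instead of into the smile term.

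What is true is that your telescoping produces the correct \emph{total}: one can check that at each $t_k$, your contribution $-\Delta_k+\text{(period-$k$ Gamma)}$ equals the paper's $\bigl[\tilde{V}_k(X_k,\xi_k)-V_{\bs}(t_k,X_k,v)\bigr]+\bigl[J_k(X_k,a_k^{\star},\xi_k)-\tilde{V}_k(X_k,\xi_k)\bigr]$, because both equal $J_k(X_k,a_k^{\star},\xi_k)-V_{\bs}(t_k,X_k,v)$ after using $\tilde{C}_k(\cdot,\xi_k)=C_k^{\bs}(\cdot,v)+\text{(Gamma)}$. So your argument would prove $V(t)=V_{\bs}(t,X(t),v)+V_1(t)+V_2(t)+V_3(t)$ only \emph{after} you regroup into the paper's pieces, which requires invoking $\xi_n$ at every date. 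The paper organizes this directly: it first proves by backward induction the discrete-time identity $V(t_n)=V_{\bs}(t_n,X_n,v)+V_2(t_n)+V_3(t_n)$ (Lemma \ref{lemma:va_eq}), where the induction step is precisely the add-and-subtract of $\tilde{V}_{n}(X_{n},\xi_{n})$; then a single It\^{o} application on $[t,t_n]$ gives $V_1(t)$. If you want to keep your forward-telescoping route, the fix is to insert, at each $t_k$, the implied variance $\xi_k$ and split $C_k(K(X_k,a_k^{\star}))-C_k^{\bs}(K(X_k,a_k^{\star}),v)$ as $\bigl[\tilde{C}_k(\cdot,\xi_k)-C_k^{\bs}(\cdot,v)\bigr]$ plus the next-level remainder, rather than splitting it as Gamma plus remainder.
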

\begin{remark}[Sub-optimal Withdrawal Risk]
\label{rem:suboptimal_withdrawal_risk}
With respect to the two-period case (see Eq. \eqref{eq:va_eq}),
Eq. \eqref{eq:va_eq_multi_period} discloses one extra valuation adjustment term $V_{3}(t)$
which stems from the fact that the optimal withdrawal strategy associated with
the ``true'' model is only suboptimal under the BS model; see Eq. \eqref{eq:va_suboptimal_process}
for a concise definition of the sub-optimal withdrawal risk.

Generally speaking,  
it is not surprising that the optimal withdrawal strategy of any given model 
is not in line with the one observed from the market due to the existence of model risk.
Such a disagreement has been attributed to the irrationality of the policyholder
or tax considerations in the literature; see
\cite{Chen2008,Moenig2016,Knoller2016} and the references therein.
This article provides an alternative explanation, that is, the model mis-specification risk.
\end{remark}

The following corollary directly follows from the above theorem.
\begin{corollary}
Under the conditions of Theorem \ref{thm:va_multi_period}, we have
\ba
\label{eq:pnl_attribution}
\nonumber
\dd V(t) &=& 
\underbrace{\partial_t V_{\bs}(t, X(t), v) \dd t}_{\textup{time decay}} 
+ \underbrace{\partial_x V_{\bs}(t, X(t), v)\dd X(t)}_{\textup{Delta effect}}\\ 
&&+ \underbrace{\frac{1}{2}\partial_{x}^2 V_{\bs}(t, X(t), v)\dd [X](t)}_{\textup{realized vol effect}} 
+ \underbrace{\dd V_{2}(t)}_{\textup{future smile effect}}
+ \underbrace{\dd V_{3}(t),}_{\textup{withdrawal effect}}
\ea    
where the expressions of $V_2(t)$ and $V_3(t)$ are relegated to Definition \ref{def:va_adj} 
of Appendix \ref{app:proof_multi_period} for the clarity of presentation.
\end{corollary}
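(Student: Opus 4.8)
The plan is to obtain the statement directly from Theorem~\ref{thm:va_multi_period} by passing to differentials and expanding the Black--Scholes term with It\^o's formula. Fix $n$ with $1\le n\le N-1$ and work on the interval $[t_{n-1},t_n]$.

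First I would start from \eqref{eq:va_eq_multi_period}, that is, $V(t)=V_{\bs}(t,X(t),v)+V_1(t)+V_2(t)+V_3(t)$, and take stochastic differentials of both sides, obtaining $\dd V(t)=\dd V_{\bs}(t,X(t),v)+\dd V_1(t)+\dd V_2(t)+\dd V_3(t)$ on $(t_{n-1},t_n)$. The only summand that needs unpacking is the first one: under the regularity hypotheses of Appendix~\ref{app:assum}, $V_{\bs}(\cdot,\cdot,v)$ is $C^{1,2}$ in $(t,x)$ on $(t_{n-1},t_n)\times(0,\infty)$ while $X$ is a semimartingale there, so It\^o's formula yields $\dd V_{\bs}(t,X(t),v)=\partial_t V_{\bs}\,\dd t+\partial_x V_{\bs}\,\dd X(t)+\tfrac12\partial_x^2 V_{\bs}\,\dd\[X\](t)$. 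The first two terms are precisely the time-decay and Delta effects of the statement, and the quadratic-variation term is the realized-volatility effect.

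Next I would invoke the closed forms of $V_1$, $V_2$, $V_3$ recorded in Definition~\ref{def:va_adj} of Appendix~\ref{app:proof_multi_period}. By construction $\dd V_2(t)$ and $\dd V_3(t)$ are the future-smile and withdrawal effects, so all that remains is to verify that $\dd V_1(t)$ adds nothing beyond the realized-variance term already produced --- equivalently, that the continuous finite-variation part of $V_1$ either cancels against, or collapses into, the corresponding piece of the It\^o expansion once the Black--Scholes PDE \eqref{eq:BS_eq} is used to substitute for $\partial_t V_{\bs}$ and the remaining drift and martingale pieces are regrouped. (As an internal consistency check, \eqref{eq:martingale_eq} forces $e^{-rt}V(t)$ to be a local $\q$-martingale on $(t_{n-1},t_n)$, which pins down the drift of $\dd V(t)$.) Collecting the surviving contributions produces exactly the five labelled terms of the corollary.

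I do not expect a real obstacle: the corollary is essentially Theorem~\ref{thm:va_multi_period} restated in differential form, and the argument is short. The only points that call for some care are the routine justification of It\^o's formula under the standing integrability assumptions, the correct treatment of the second-order term should $X$ carry jumps strictly inside $(t_{n-1},t_n)$, and the elementary bookkeeping --- via the PDE \eqref{eq:BS_eq} --- that identifies the $\dd V_1$ contribution with the realized-volatility term; all of this is mechanical rather than conceptual.
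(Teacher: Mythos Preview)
Your plan is exactly the paper's: the corollary is asserted to ``directly follow'' from Theorem~\ref{thm:va_multi_period} with no written argument, and the intended derivation is precisely to differentiate \eqref{eq:va_eq_multi_period} and apply It\^o's formula to $V_{\bs}(t,X(t),v)$.

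The one point where your write-up is looser than it should be is the handling of $\dd V_1(t)$, and your proposed fix does not actually land on the displayed identity. Writing $e^{-rt}V_1(t)=M(t)-\int_0^t e^{-ru}\tfrac12\,\partial_x^2 V_{\bs}\bigl(\dd[X](u)-vX^2(u)\,\dd u\bigr)$ with $M$ a $\q$-martingale, one gets
\[
\dd V_1(t)=rV_1(t)\,\dd t+e^{rt}\,\dd M(t)-\tfrac12\,\partial_x^2 V_{\bs}\bigl(\dd[X](t)-vX^2(t)\,\dd t\bigr),
\]
which is not zero. If you now ``substitute for $\partial_t V_{\bs}$'' via \eqref{eq:BS_eq} as you suggest, the time-decay term disappears and you arrive at $\dd\bigl(V_{\bs}+V_1\bigr)=\partial_x V_{\bs}\bigl(\dd X-rX\,\dd t\bigr)+r(V_{\bs}+V_1)\,\dd t+e^{rt}\,\dd M$, which is the martingale statement behind \eqref{eq:martingale_eq} but \emph{not} the five-term display of the corollary, since the corollary explicitly keeps $\partial_t V_{\bs}\,\dd t$. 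In other words, the BS-PDE substitution you propose regroups the terms into a different (equally valid) attribution, not the one stated. Read literally, the corollary simply drops $\dd V_1$---or, more charitably, folds it into the ``realized vol effect'' label, since $V_1$ is by construction the realized-volatility adjustment. The paper does not spell this out either, so your instinct that this is the only step needing care is correct; just be aware that the bookkeeping you sketch does not close the gap to the exact form written.
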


\begin{remark}[Exhaustive Risk Attribution]
\label{rem:risk_attribution}
The primary message delivered by the above corollary is that
the P\&L of the variable annuity can be \textit{exhaustively} attributed to four drivers
(i) spot price\footnote{To be precise, we refer
to P\&L that is solely caused by the first-order-change of the underlying asset price
as the spot price risk. The second-order effect is attributed to the future realized volatility.}, 
(ii) future realized volatility, (iii) future implied volatility
and (iv) sub-optimal withdrawal.

It is also worth noting the last two terms are missing in the classical attribution analysis
based on the BS greeks.
{\it Generally speaking, for any given model,
should it diverge from the market,
the classical greeks-based-attribution is incomplete, which is reflected by unexplained profit/loss.}
\end{remark}

\subsection{Separating Risk, Hedging, and Pricing Models}
\label{sec:separation}
Now we comment on the impacts of the model risk of the BS model from several aspects.
\begin{itemize}
\vspace{1ex}\item Firstly and foremost, Remark \ref{rem:breakeven} reveals that 
the BS model enables the insurer to speculate the future implied and realized volatility
via marking up/down the parameter $v$.
This distorts the incentives of the insurer and is undesirable from the regulator's perspective. 

\vspace{1ex}\item From the perspective of pricing, the problem with the BS model is more about its lack of flexibility.
There is only one degree of freedom (the parameter $v$) that can be utilized by the pricer to 
mark up/down the realized volatility and future smile risks {\it simultaneously}.
In other words, the insurer is not able to control the two risks {\it separately} 
despite that realized and implied volatility don't behave in line with each other in reality \cite{Bergomi2015}
and have different impacts on the insurer's P\&L (see Remark \ref{rem:pnl_leakage}).

\vspace{1ex}\item From the viewpoint of hedging, the hedger's perception of the P\&L is misled by the BS model. 
We recall from Remark \ref{rem:pnl_leakage}
that the presence of the future smile risk causes an instantaneous jump across the contract event date.
This is very annoying: 
it is likely that the hedger's positive cumulative P\&L is suddenly skewed up by leakage term \eqref{eq:pnl_leakage}.

\vspace{1ex}\item Remark \ref{rem:risk_attribution} 
reveals that the BS model can be used as a decent tool for risk attribution analysis.
Specifically, it precisely pinpoints all risk drivers behind the variable annuity;
see Remark \ref{rem:risk_attribution}.
\end{itemize}

\vspace{2ex}
\section{Numerical Examples}
\label{sec:num_studies}
This section conducts some numerical experiments to study the efficacy of the BS-delta-hedging
given the market does not respect the BS model assumptions.
\subsection{Contract Specification}
The specification of contract-related payoff functions is relegated to Appendix \ref{app:num_studies}
for the clarity of the presentation.
We confine our attention into the two-period case considered by Section \ref{sec:two_period_case} with $t_i=i, i=0,1,2$.
\subsection{Dynamics of the Market}
For illustration purposes, we postulate the market follows the Heston-type stochastic volatility model given by
\bas
\begin{cases}
\dd X(u) &= r X(u) \dd u + \sqrt{\alpha(u)} X(u)\dd W_1(u),\ u \in [t_0, t_1],\\
\dd \alpha(u) &= \kappa \[\theta - \alpha(u)\] \dd u + \nu \sqrt{\alpha(u)}\dd W_2(u),
\ \dd W_1(u) \dd W_2(u) = \rho \dd u,
\end{cases}    
\eas
with $u \in [t_0, t_1]$.
We adopt the Heston parameters given in the following table.
\begin{table}[ht]
\begin{center}
\caption{Heston parameters used for numerical experiments.}
\label{tab:heston_para}
\begin{tabular*}{0.9\textwidth}{l@{\extracolsep{\fill}}llllll}
\toprule
Parameter &  $r$ & $\kappa$& $\nu$ & $\theta$& $\alpha(0)$& $\rho$ \\
\midrule
Value &  0.0 &0.1& 0.1& 0.04& 0.04& -0.69\\
\bottomrule
\end{tabular*}
\end{center}
\end{table}

In our numerical experiments,
we simulate the Heston process at equally-spaced time points 
$\{u_i\}_{i=0}^{N}$ with $u_0=t_0$, $u_{N}=t_1$ and step size $\Delta u = u_{i+1}-u_{i}$.
Further denote $X_{i}^{{\blue [m]}}$ as the value of $X(u_i)$ in $m$-th simulated path.
In the consequent numerical experiments, we simulate 100 scenarios in total,
which is sufficient for illustrating the main conclusion.

\subsection{Efficacy of the BS-Hedging}
We recall from Eqs. \eqref{eq:carry_pnl} and \eqref{eq:mtm_pnl} that
the cumulative P\&L of the BS-hedging strategy is given by
\bas
\textup{P\&L}(t) &=& \[\Pi(t) - V_{\bs}\(t, X(t), v\)\]\\
&&+ \[V_{\bs}\(t_1, X(t_1), v\) - V_{\bs}\(t_1, X(t_1), \xi_1\)\]{\blue\1\{t=t_1\}},\ \ t\in[t_0, t_1],
\eas
where 
\bas
\Pi(t) = \int_0^{t}\partial_x V_{\bs}(u, X(u), v) \dd X(u) + r\int_0^t\[\Pi(u) - \partial_x V_{\bs}(u, X(u), v)X(u)\]\dd u.
\eas

Then the cumulative P\&L at time $t_1$ along the $m$-th simulated path is approximately given by:
\bas
\tilde{\pnl}^{{\blue [m]}}(v) = \tilde{\Pi}^{{\blue [m]}}(u_N, v) - V_{\bs}\(u_N, X_{N}^{{\blue [m]}}, \xi_1^{{\blue [m]}}\)
\eas
where $\tilde{\Pi}$ is recursively defined by
\bas
\tilde{\Pi}^{{\blue [m]}}(u_{{\red i+1}}, v) = 
\Delta_{\bs}^{{\blue [m]}}(u_{\red i})\[\Delta X_i^{{\blue [m]}} - rX_i^{{\blue [m]}}\Delta u\]
+ \tilde{\Pi}^{{\blue [m]}}(u_{{\red i}}, v) \[1 + r\Delta u\],
\ \ 0\leq i \leq N-1,
\eas
with $\tilde{\Pi}^{{\blue [m]}}(u_{0}, v) = V_{\bs}\(0, X(t_0), v\)$ and
$
\Delta_{\bs}^{{\blue [m]}}(u_i) := \partial_x V_{\bs}\(u_i, X_i^{{\blue [m]}}, v\);
$
see Eq. \eqref{eq:bs_delta} of Appendix \ref{app:num_studies}
for the expression of $\partial_x V_{\bs}$.

Figure \ref{fig:density_plot_imp_vol} displays the histograms of the simulated values
of the BS implied variance of the contract $\xi_1^{{\blue [m]}}$
and the underlying asset price $X_N^{{\blue [m]}}$ at $t_1$ respectively.
We can see that the implied variance at a future time point is random
under the Heston model:
it can lie arbitrarily above or below any given BS variance parameter $v$
and thus introduces the future smile risk and P\&L leakage; see Remark \ref{rem:pnl_leakage}.
\begin{figure}[ht]
\begin{center}
\includegraphics[width=1.0\textwidth]{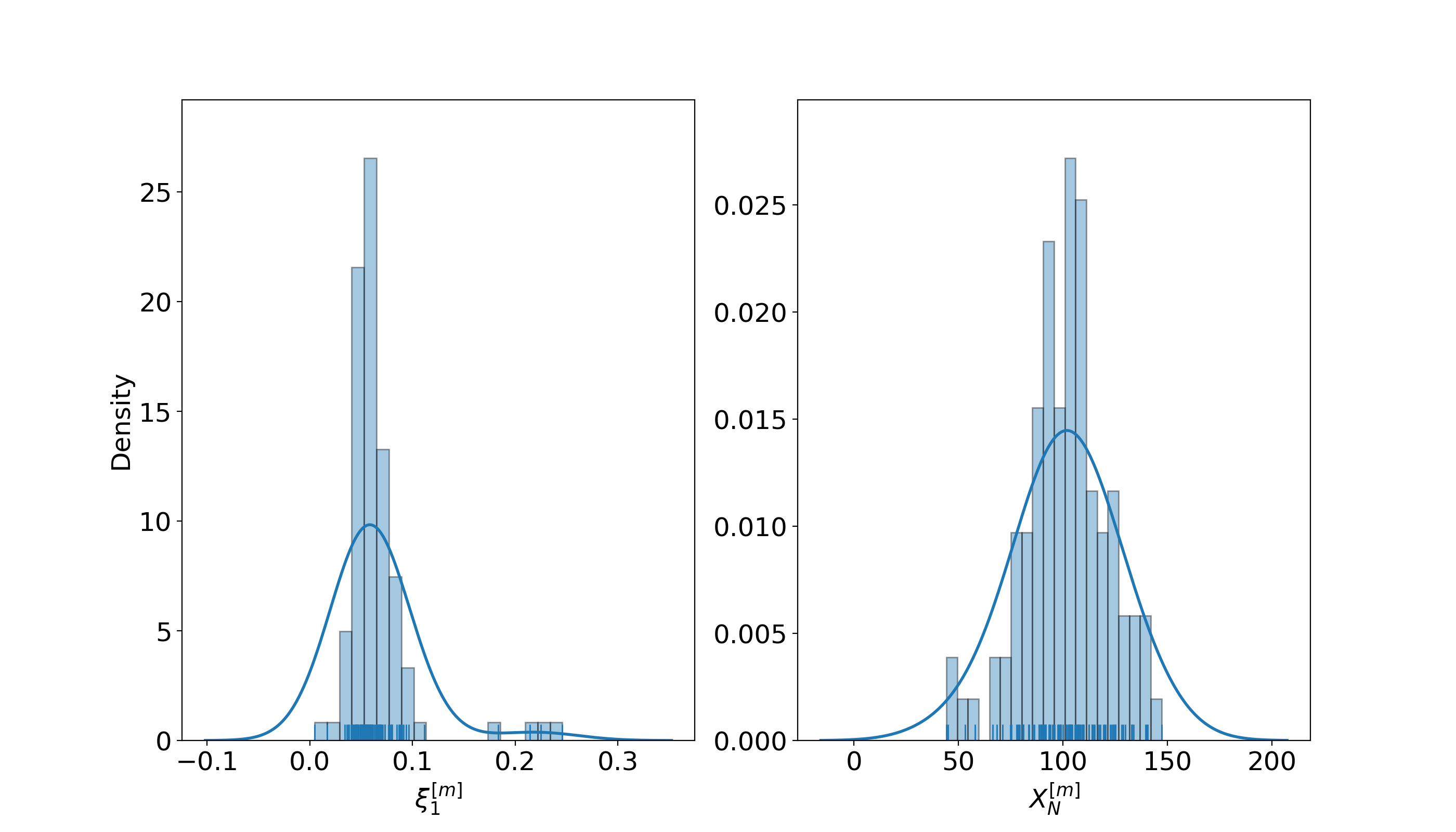}
\captionsetup{width=0.9\textwidth}
\caption{Histograms of implied variance of the contract and the underlying price at $t_1$.}
\label{fig:density_plot_imp_vol}
\end{center}
\end{figure}

Figure \ref{fig:density_plot_pnl} plots the histograms of the cumulative P\&L at $t_1$
with varying hedging frequency (daily vs monthly) and BS variance parameter 
$v$ ($0.04$ vs $0.09$).
We have two observations. Firstly, as one increases the hedging frequency 
from monthly to daily, the P\&L becomes more stable,
 as reflected by the more spiked shape of the density plot.
Such a variance reduction is due to the fact that the BS-Delta-hedging eliminates the spot price risk.
Secondly, as the hedger marks up the BS variance parameter $v$,
it is more likely to get positive cumulative P\&L.
This is in line with the conclusion of Proposition \ref{prop:carry_pnl}:
both P\&L leakage and slippage terms are monotone in $v$;
furthermore, they are positive should $v$ dominate the implied and realized variance over the course of the hedging.

\begin{figure}[ht]
\begin{center}
\includegraphics[width=1.0\textwidth]{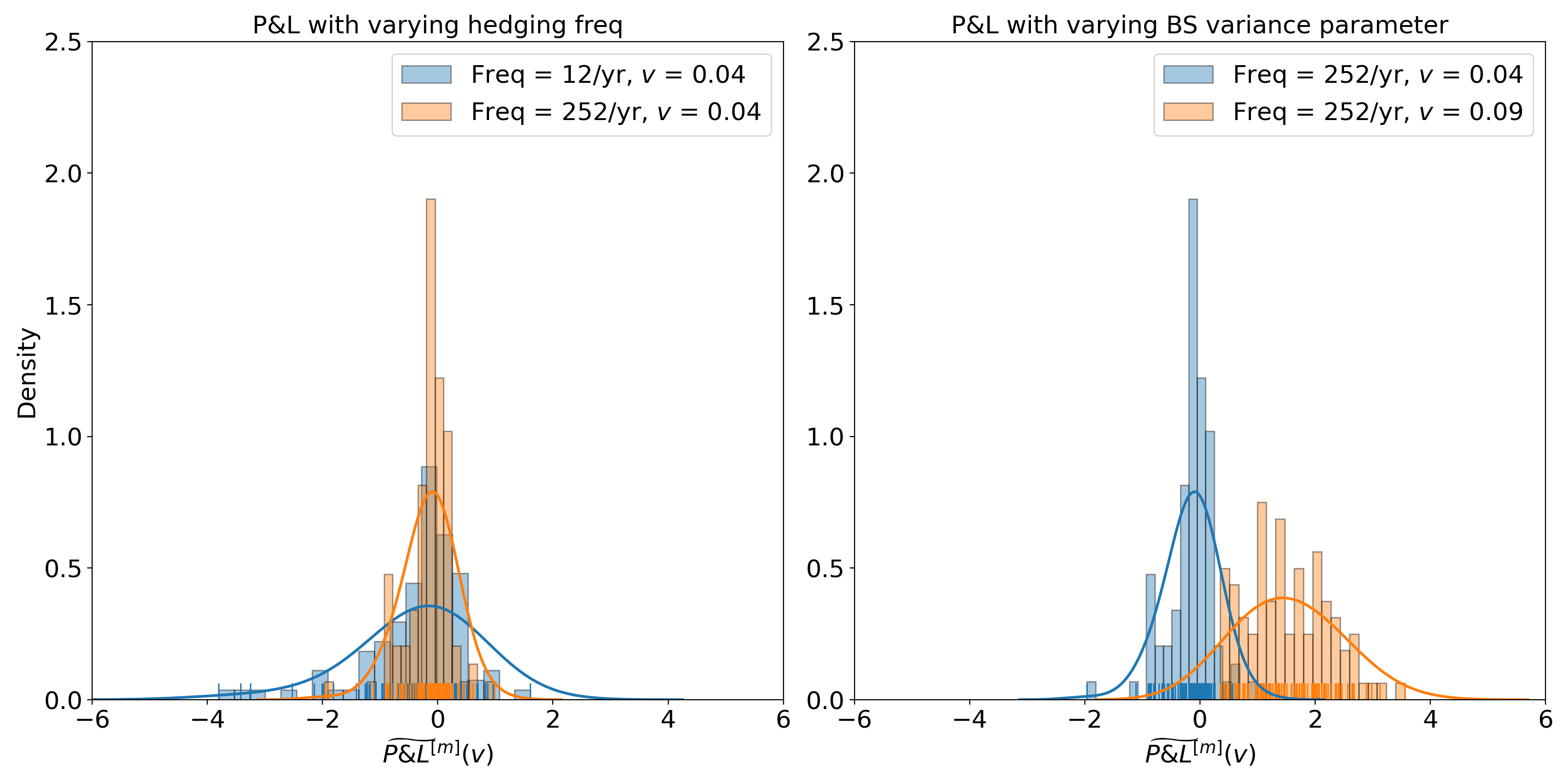}
\captionsetup{width=0.9\textwidth}
\caption{Histograms of the cumulative P\&L at $t_1$.}
\label{fig:density_plot_pnl}
\end{center}
\end{figure}

To sum up, despite that the market disagrees with the BS model assumptions,
the hedger can always stabilize her final P\&L by increasing hedging frequency
and even benefit from the mis-hedging (model risk) by speculatively marking up the BS variance parameter.

\vspace{2ex}
\section{Concluding Remarks}
\label{sec:conclusion}
We have shown that the fair value of the variable annuity can be decomposed into four parts
(i) BS model price,
(ii) valuation adjustment for future realized volatility,
(iii) valuation adjustment for future implied volatility smile,
(iv) and valuation adjustment for sub-optimal withdrawal risk.
This discloses that the risk of the variable annuity
can be exhaustively attributed to four corresponding factors.
The insurer can conservatively price (ii) and (iii) simultaneously 
by marking up the variance parameter
but has no control over (iv).

This paper also shows that the impact of model risk on the cumulative P\&L of 
the classical BS-based-hedging strategy
is reflected in two different ways: gradual slippage and instantaneous leakage.
There is even a chance that the hedger can benefit from taking the model risk.
Furthermore, the P\&L caused by spot price can always be eliminated by the strategy,
which is \textit{immune} to the model risk.

It is worth stressing that the primary thrust of this article is delineating the risk profile of variable annuities
rather than promoting or criticizing the BS model.
The BS model plays the role of an extractor for the spot price risk 
and our out-of-model adjustment formula further anatomizes the residual model risk.
Pinpointing all risk drivers paves the way to systematically access the advantages and limitations of more advanced models 
(local volatility model, stochastic volatility model, stochastic local volatility model, etc)
in pricing and risk-managing variable annuities.
Specifically, one may scrutinize how a given model prices in each risk segment
and accordingly, underprice/overprice the product as a whole.
This is left as a future research avenue.

Generally speaking, one may decompose the model-free price into any given model price
plus out-of-model adjustments.
It will be fruitful to explore different decomposition formulas on top of different models. 
However, the more fundamental question is: 
does such a decomposition decouples the risks with different natures
and allow the pricer to control them \textit{individually} by marking up/down some free parameter?
Our choice of the BS model as the extractor does not have no special significance.
A more fancy model does not necessarily give better pricing and risk-management of 
the variable annuity due to the less transparency of the model risk.
Putting it another way, the simpler the model is, the better grasp of the model risk is.

\vspace{2ex}
\section*{Acknowledgements}
The author is grateful to the inspiring discussions with Dr. Xi Tan and Professor Chengguo Weng.
\newpage

\setlength{\parskip}{0.5ex}
\hypertarget{LinkToAppendix}{\ }
\appendix
\vspace{-1.2cm}
\section{Appendix Companion to Section \ref{sec:num_studies}}
\label{app:num_studies}
\subsection{Contract Specification}
For the clarity of illustration, 
we consider a simple variable annuity contract specified as follows.
\begin{itemize}
\item[] \textit{Transition of investment account across withdrawal date}\quad
Across each withdrawal date, the investment account balance is reduced
by the withdrawal amount and accordingly,
\bas
K(x, a) = \max\[x-a,0\],\ a\in A_{n}(x)=\[0, x\].
\eas

\vspace{1ex}
\item[] \textit{Intermediate payoff function}\quad 
Given the withdrawal amount $a$ at $t_n$, the policyholder's reward is given by
\bas
f_n(x, a) = a(1-\eta),\ a \in A_{n}(x),
\eas
where $\eta \in [0,1]$ is withdrawal penalty.

\vspace{1ex}\item[] {\it Terminal payoff function}\quad 
At maturity, the policyholder can receive the balance of the investment account 
or the guaranteed withdrawal amount, whichever is larger.
Thus, the terminal payoff is given by
\bas
g(x) = x + \max\[G-x, 0\].
\eas
\end{itemize}

Throughout Section \ref{sec:num_studies}, we adopt the set of parameters in the table below.
\begin{table}[ht]
\begin{center}
\caption{Product parameters used for numerical experiments}
\label{tab:product_para}
\begin{tabular*}{0.5\textwidth}{c@{\extracolsep{\fill}}cc}
\toprule
Parameter &  $\eta$ & $G$ \\
\midrule
Value &  0.6 & 50\\
\bottomrule
\end{tabular*}
\end{center}
\end{table}

\subsection{BS Value Function and Delta}
It follows from Eq. \eqref{eq:Bellman_eq_BS} that $V_{\bs}(t_2, x, v) = \max[G, x]$ and
\bas
V_{\bs}(t_1, x, v) = \sup_{a \in A_1(x)} \[f_1(x,a) + K(x,a) + P_{\bs}(\delta, K(x,a), G, v)\],
\eas
where $P_{\bs}(\delta, x, k, v)$ denotes the BS put option price with spot price $x$, 
strike $k$, time-to-expiry $\delta$ and BS-variance parameter $v$.

The plot of the value function is displayed in Figure \ref{fig:bs_value_fun}
from which we can see how sensitive the function is to the BS variance parameter.
When the market does not follow the BS model, we recall from Proposition \ref{prop:carry_pnl} 
that the fair value of the contract is given
by the above BS value function with $v$ replaced by the prevailing implied variance at $t_1$
which can lies arbitrarily above or below $v$.

To avoid direct evaluation of the derivative of $V_{\bs}$, we adopt the likelihood ratio method
\cite{Broadie1996} to compute
\ba
\label{eq:bs_delta}
\partial_x V_{\bs}(u, x, v) = \e^{\bs}\[\frac{e^{-r(t_1 - u)}}{x\sqrt{v(t_1-u)}}d(u, x, v)V_{\bs}\(t_1, X(t_1), v\)\]
\ea
where 
\bas
d(u, x, v) = \frac{\ln \[X(t_1)/x\] - \(r - v/2\)(t_1 -u)}{\sqrt{v(t_1-u)}}
\eas
and $\e^{\bs}$ denotes the expectation under which
\bas
\ln \[X(t_1)/x\] \sim \mathcal{N}\((r-v/2)(t_1 - u), v(t_1 - u)\).
\eas

\begin{figure}[ht]
\begin{center}
\includegraphics[width=0.8\textwidth]{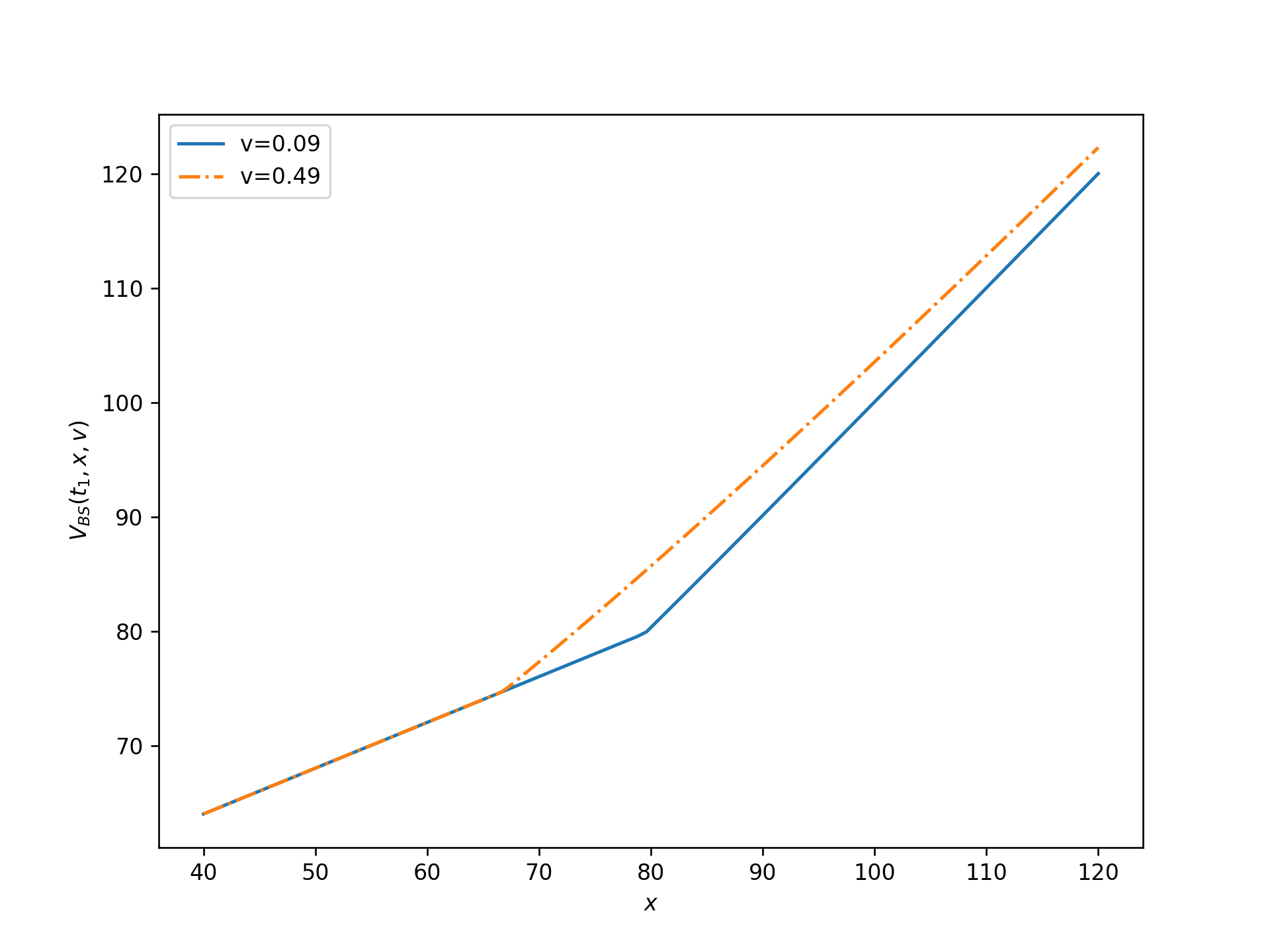}
\captionsetup{width=0.95\textwidth}
\caption{BS value function with varying variance parameters.}
\label{fig:bs_value_fun}
\end{center}
\end{figure}

\vspace{2ex}
\section{Technical Assumptions}
\label{app:assum}
Throughout this paper, we impose the following technical assumptions.
Consider a filtered probability space $\(\Omega, \(\F_{t}\)_{t\in[t_0, t_N]}, \q\)$ 
satisfying the usual conditions.

\begin{assumption}
\bas
\sup_{\mathbf{a}\in \mathcal{A}} \e^{\q}\[\sum_{n=1}^{N-1} |f_{n}(X_n, a_n)|\] < \infty
\ \textnormal{and}\ \sup_{\mathbf{a}\in \mathcal{A}} \e^{\q}\[g(X_{N})\],
\eas
where 
\bas
\mathcal{A}=\left\{\left.\mathbf{a}=\{a_{n}\}_{n=1}^{N-1}\right|
a_{n}\ \textnormal{is}\ \F_{t_n}\textnormal{-measurable and}\ a_n \in A_n(X_{n})
\right\}
\eas  
is the set of admissible controls.
\end{assumption}

\begin{assumption}
\label{assum:optimal_policy}
There exists $\mathbf{a}^{\star}= \{a_{n}^{\star}\}_{n=1}^{N-1}\in \mathcal{A}$ s.t. 
the supremum in Bellman equation \eqref{eq:Bellman_eq} is attained at $a_{n}^{\star}$ for $1\leq n \leq N-1$.
\end{assumption}

\begin{assumption}
The investment account of the variable annuity is tied to a tradable asset 
and accordingly, its value between two consecutive withdrawal dates is given by
\bas
X(t) = X_{n^{+}} \frac{S(t)}{S(t_n)},\ t\in{\blue (t_n}, t_{n+1}],
\ 0\leq n\leq N-1,
\eas
where $S(t)$ is the price process of the underlying asset.

Assume that $S(t)$ is a continuous semi-martingale
and does not pay dividends.
Then we have
\ba
\label{eq:martingale}
\e^{\q}\[\left.\dd X(u) - rX(u)\dd u\right|\F_t\] = 0,\ \ t_n < t\leq u\leq t_{n+1},
\ \ 0\leq n\leq N-1
\ea   
with $\q$ being the risk-neutral measure.
\end{assumption}
\begin{remark}
The zero-dividend assumption can be removed by replacing $r$ in Eq. \eqref{eq:martingale}
by $r-d$ with $d$ being the dividend yield.
\end{remark}

\begin{assumption}
The value function under the BS model $V_{\bs}(t_n, x)$ defined via Eq. \eqref{eq:Bellman_eq_BS} 
is convex in $x$. Specially, when $n=N$, the terminal payoff $V_{\bs}(t_n, x) = g(x)$ is a convex function.
\end{assumption}
\begin{remark}
The above assumption ensures the BS-implied variance in the sequel Definition 
\ref{def:imp_var} is well-defined.
The BS value function of the variable annuity is convex under very mild conditions;
see e.g. \cite{Azimzadeh2015,Huang2016,Huang2017,Shen2020}.
The convexity is a sufficient condition for the existence of the implied variance 
and thus can be further relaxed.
\end{remark}

\vspace{2ex}
\section{Proof of Theorem \ref{thm:risk_decompose}}
\label{app:proof_risk_decompose}
\begin{proof}
Recall that when $N=2$ we have $V(t_2) = g(X(t_2)).$
Then it follows from Eq. \eqref{eq:post_withdrawal_value} that  
\bas
C_{1}(x) = \e_{1,x}^{\q} \[e^{-r\delta}V(t_2)\] 
= \e_{1,x}^{\q} \[e^{-r\delta}g(X(t_2))\] = C_{1}^{\bs}\(x,\xi_{1}\),
\eas
where the last equality is by the definition of $\xi_1$.
By Eqs. \eqref{eq:Bellman_eq} and \eqref{eq:Bellman_eq_BS}, we get
\bas
V(t_1) &=& \sup_{a \in A_1(X_1)} \[f_1(X_1, a) + C_1\(K(X_1, a)\)\] \\
&=& \sup_{a \in A_1(X_1)} \[f_1(X_1, a) + C_1^{\bs}\(K(X_1, a), {\blue \xi_1}\)\] = V_{\bs}\(t_1, X_1, {\blue\xi_1}\).
\eas
Then it follows from Eq. \eqref{eq:martingale_eq} that
\bas
V(t) = \e^{\q}\[\left. e^{-r (t_1 -t)}V(t_1)\right|\F_t\]
= \e^{\q} \[\left. e^{-r (t_1 -t)}V_{\bs}\(t_1, X(t_1), {\blue v}\)\right|\F_t\] + V_{2}(t),
\eas
where the last equality follows by Eq. \eqref{eq:smile_va}.

On the other hand, the BS equation \eqref{eq:Bellman_eq} in conjunction with the It\^{o}'s lemma implies
\bas
e^{-r (t_1 -t)}V_{\bs}\(t_1, X(t_1), v\) - V_{\bs}(t, X(t), v)
&=& \int_{t}^{t_1}e^{-r(u - t)} \frac{1}{2}\partial_{x}^2V_{\bs}\[\dd\[X\](u)-vX^2(u)\dd u\]\\
&&+ \int_{t}^{t_1} e^{-r(u-t)}\partial_x V_{\bs} \[\dd X(u) - r X(u)\dd u\].
\eas
Taking expectations on both sides of the above equation gives
\bas
\e^{\q} \[\left. e^{-r (t_1 -t)}V_{\bs}\(t_1, X(t_1), {\blue v}\)\right|\F_t\]
&=&  V_{\bs}(t, X(t), {\blue v}) + V_1(t)\\ 
&&+ \e^{\q}\[\left.\int_{t}^{t_1} e^{-r(u-t)}\partial_x V_{\bs} \[\dd X(u) - r X(u)\dd u\]\right|\F_t\]\\
&=& V_{\bs}(t, X(t), v) + V_1(t)
\eas
where the last equality follows by Eq. \eqref{eq:martingale}.

Putting the last three displays together yields Eq. \eqref{eq:va_eq}. This completes the proof.
\end{proof}

\section{Proof for Proposition \ref{prop:carry_pnl}}
\label{app:proof_carry_pnl}
\begin{proof}
Let $\Pi(u)$ be the value of the insurer's hedges as of time $u$.
In accordance with the setup of the hedging strategy {\bf (H1)}--{\bf (H3)}, the hedges are made up of the underlying asset $X(u)$ and cash position $B(u)$:
\bas
\Pi(u) = \Delta_{\bs}(u)X(u) + B(u),
\ \ \textup{with}\ \ \Delta_{\bs}(u) := \partial_x V_{\bs}(u, X(u), v)
\eas
which satisfies the self-financing condition:
\ba
\label{eq:self-financing}
\dd \Pi(u) = \Delta_{\bs}(u) \dd X(u) + rB(u) \dd u,
\ea
and the initial cost constraint $\Pi(0)=V_{\bs}(0, X_0, v)$.

In the sequel, we prove Eqs. \eqref{eq:carry_pnl} and \eqref{eq:mtm_pnl} consequently.
The first equality of Eq. \eqref{eq:carry_pnl} follows by {\bf (H2)}.
By It\^{o}'s lemma, we get
\bas
\dd V_{\bs}(u, X(u), v) = \partial_t V_{\bs} \dd u + \Delta_{\bs}(u) \dd X(u) + 
\frac{1}{2}\partial_{x}^2V_{\bs}\[\dd X(u)\]^2,
\eas
which in conjunction with Eq. \eqref{eq:self-financing} yields
\bas
\dd \[V_{\bs}(u, X(u),v)- \Pi(u)\] &=& \partial_t V_{\bs} \dd u + 
\frac{1}{2}\partial_{x}^2V_{\bs}\[\dd X(u)\]^2\\ 
&&- r\[\Pi(u) - \Delta_{\bs}(u) X(u)\]\dd u.
\eas
Recall that $V_{\bs}$ satisfies Eq. \eqref{eq:BS_eq}. Plugging it into the above equation yields
\bas
\dd \[V_{\bs}(u, X(u), v) - \Pi(u)\] &=&
\frac{1}{2}\partial_{x}^2V_{\bs}\[\dd\[X\](u)-vX^2(u)\dd u\]\\
&&+ r\[V_{\bs}(u, X(u), v) - \Pi(u)\]\dd u.
\eas
Integrating the above equation over $[t_0, t]$ and exploiting the fact that $e^{-r u} X(u)$ is a martingale implies 
\bas
V_{\bs}(t, X(t), v) - \Pi(t) &=& e^{r(t-t_0)}\[V_{\bs}(0, X_0, v) - \Pi(0)\] \\
&&+ \frac{1}{2}e^{r(t-t_0)} \int_{t_0}^{t} e^{-r(u-t_0)}\partial_{x}^2V_{\bs}\[\dd\[X\](u)-vX^2(u)\dd u\].
\eas
Then Eq. \eqref{eq:carry_pnl} follows by the fact that $V_{\bs}(0, X_0, v) = \Pi(0)$.

Finally, Eq. \eqref{eq:mtm_pnl} follows by {\bf (H3)}.
This completes the proof of Proposition \ref{prop:carry_pnl}.
\end{proof}

\vspace{2ex}
\section{Proof of Theorem \ref{thm:va_multi_period}}
\label{app:proof_multi_period}
\subsection{Preliminaries}
\begin{definition}
\label{def:quasi_BS_post_withdrawal}
Consider the following BS-type PDE:
\ba
\label{eq:BS_eq_xi}
\begin{cases}
u|_{t=t_{n+1}}=V_{\bs}(t_{n+1}, x, {\spicy v}),&\\
\partial_t u + rx \partial_x u + {\blue \frac{\xi}{2}}x^2\partial^2_{xx} u - r u = 0,\ \ t\in[t_n, t_{n+1}),\\
\end{cases}
\ea
We define the function 
\ba
\tilde{C}_n(x, {\blue \xi}) = u\(t_{n}^{+}, x\),\ \ 
1\leq n \leq N-1.
\ea
\end{definition}
\begin{remark}
\label{rem:20220425}
It is worth stressing the difference between the functions $\tilde{C}_n(\cdot, \xi)$ and $C_{n}^{\bs}(\cdot)$ (see Eq. \eqref{eq:post_withdrawal_BS}).
\begin{enumerate}
\item They are both defined via the BS-type PDE but with BS variance parameters $\xi$ and $v$, respectively; see Eqs. \eqref{eq:BS_eq} and \eqref{eq:BS_eq_xi} respectively.

\item The time boundary conditions at $t=t_{n+1}$ of the two PDEs are the same.
\end{enumerate}
See also the following Lemma \ref{lemma:20220423}.
\end{remark}

\begin{definition}
\label{def:imp_var}
We say $\xi_n$ is the BS-implied variance of the European option with payoff $V_{\bs}\(t_{n+1}, X_{n+1}, v\)$
observed at $t_n$ if it satisfies the following equation:
\ba
\label{eq:imp_var}
\e_{n,x}^{\q}\[e^{-r\delta}V_{\bs}\(t_{n+1}, X_{n+1}, v\)\] = \tilde{C}_n(x, \xi_n),
\ea
where $\tilde{C}_n(x, \xi)$ is defined in Definition \ref{def:quasi_BS_post_withdrawal}
and we recall that $\e_{n,x}^{\q}[\cdot]:=\e^{\q}\[\cdot|\mathcal{G}_{t}^{x}\]$
with 
$
\mathcal{G}_{n}^{x}:=\sigma\(\{X_{n^{+}}=x\}\bigcup \mathcal{F}_{t_n}\).
$
\end{definition}

Let 
\ba
\label{eq:obj_BS}
J_n(x, a, \theta) = f_n(x,a) + \tilde{C}_n\(K(x,a), \theta\).
\ea
and 
\ba
\label{eq:quasi_BS_value_fun}
\tilde{V}_n(x,\theta) = \sup_{a \in A_n(x)} J_n(x, a, \theta).
\ea
\begin{lemma}
\label{lemma:20220423}
$\tilde{C}_{n}(x, v) = C_{n}^{\bs}(x, v)$ and $\tilde{V}_n(x) = V_{\bs}(t_n, x, v)$ for $1\leq n \leq N-1$
where $C_n^{\bs}(x,v)$ and $V_{\bs}(t_n,x,v)$ are given by Eqs. \eqref{eq:post_withdrawal_BS} and \eqref{eq:Bellman_eq_BS} respectively.
\end{lemma}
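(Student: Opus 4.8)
The plan is to prove the two identities $\tilde C_n(x,v)=C_n^{\bs}(x,v)$ and $\tilde V_n(x,v)=V_{\bs}(t_n,x,v)$ simultaneously by backward induction on $n$, running from $n=N-1$ down to $n=1$. The key observation is that Definition \ref{def:quasi_BS_post_withdrawal} with the special choice $\xi=v$ reduces the PDE \eqref{eq:BS_eq_xi} to exactly the PDE \eqref{eq:BS_eq} that defines $C_n^{\bs}$, \emph{provided} the time boundary conditions at $t=t_{n+1}$ agree, i.e. provided $V_{\bs}(t_{n+1},x,v)$ on the right of \eqref{eq:BS_eq_xi} matches $V_{\bs}(t_{n+1},x,v)$ in \eqref{eq:BS_eq}. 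For $n=N-1$ this is immediate since both boundary data equal $g(x)$; for $n<N-1$ it is the content of the inductive hypothesis applied at index $n+1$.

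First I would establish the base case $n=N-1$. Here $V_{\bs}(t_N,x,v)=g(x)$ by \eqref{eq:Bellman_eq_BS}, so the terminal condition in \eqref{eq:BS_eq_xi} with $\xi=v$ is $u|_{t=t_N}=g(x)$, which is precisely the terminal condition in \eqref{eq:BS_eq} defining $C_{N-1}^{\bs}$. Since the PDE operators coincide when $\xi=v$, uniqueness of solutions to this (linear, parabolic, Cauchy-type) BS PDE gives $u(t_{N-1}^+,x)=C_{N-1}^{\bs}(x,v)$, i.e. $\tilde C_{N-1}(x,v)=C_{N-1}^{\bs}(x,v)$. Then from \eqref{eq:obj_BS}, \eqref{eq:quasi_BS_value_fun} and \eqref{eq:Bellman_eq_BS},
\[
\tilde V_{N-1}(x,v)=\sup_{a\in A_{N-1}(x)}\bigl[f_{N-1}(x,a)+\tilde C_{N-1}(K(x,a),v)\bigr]
=\sup_{a\in A_{N-1}(x)}\bigl[f_{N-1}(x,a)+C_{N-1}^{\bs}(K(x,a),v)\bigr]=V_{\bs}(t_{N-1},x,v).
\]

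For the inductive step, assume the two identities hold at index $n+1$; in particular $\tilde V_{n+1}(x,v)=V_{\bs}(t_{n+1},x,v)$. I would then observe that the terminal datum in \eqref{eq:BS_eq_xi} at index $n$ is $V_{\bs}(t_{n+1},x,v)$, and by \eqref{eq:post_withdrawal_BS} and \eqref{eq:BS_eq} this same function is the terminal datum for the PDE defining $C_n^{\bs}(x,v)$. With $\xi=v$ the PDE \eqref{eq:BS_eq_xi} and the PDE \eqref{eq:BS_eq} therefore have identical operators and identical terminal data, so by uniqueness $\tilde C_n(x,v)=C_n^{\bs}(x,v)$; feeding this into \eqref{eq:quasi_BS_value_fun} and comparing with \eqref{eq:Bellman_eq_BS} as in the base case yields $\tilde V_n(x,v)=V_{\bs}(t_n,x,v)$, closing the induction.

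The main obstacle is the appeal to uniqueness for the BS-type PDE: one must know that \eqref{eq:BS_eq} and \eqref{eq:BS_eq_xi} admit unique (classical or suitably regular) solutions in the relevant function class so that equality of operators and terminal data forces equality of solutions. Under the paper's standing regularity and growth assumptions (Appendix \ref{app:assum}) this is standard — the solution has the Feynman--Kac representation $u(t,x)=\e^{\bs}[e^{-r(t_{n+1}-t)}V_{\bs}(t_{n+1},X(t_{n+1}),v)]$ under the BS dynamics with variance $v$ — so I would simply invoke Feynman--Kac to identify both $\tilde C_n(\cdot,v)$ and $C_n^{\bs}(\cdot,v)$ with the \emph{same} conditional expectation, which sidesteps any delicate PDE well-posedness discussion. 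A minor secondary point to check is measurability/admissibility in the $\sup$ over $a\in A_n(x)$, but this is inherited verbatim from the definition of $V_{\bs}$ in \eqref{eq:Bellman_eq_BS} and requires no new argument.
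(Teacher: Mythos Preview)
Your argument is correct and rests on the same observation the paper uses (Remark \ref{rem:20220425}): when $\xi=v$, the PDE \eqref{eq:BS_eq_xi} and the PDE \eqref{eq:BS_eq} have identical operators and identical terminal data, so their solutions coincide, and the identity for $\tilde V_n$ then follows from the definitions. The only remark is that your induction is superfluous: the terminal datum in \eqref{eq:BS_eq_xi} is already \emph{defined} to be $V_{\bs}(t_{n+1},x,v)$ --- the genuine BS value function, not $\tilde V_{n+1}$ --- so the inductive hypothesis is never invoked in your step, and the argument goes through directly for each $n$ without any recursion; this is why the paper dispatches the lemma in one line.
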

\begin{proof}
The proof directly follows from Remark \ref{rem:20220425} and thus is omitted.
\end{proof}
Next, we define the valuation adjustments (VAs) for future smile and sub-optimal withdrawal risks respectively.
\begin{definition}
\label{def:va_adj}
\begin{enumerate}
\vspace{1ex} \item The valuation adjustment for the future realized volatility is defined as 
\ba
\label{eq:va_gamma_process}
V_1(t) = \e^{\q}\[\left.\int_{t}^{t_n} e^{-r(u-t)}\frac{1}{2}\partial_{x}^2 V_{\bs} \[\dd [X](u) - X^2(u)\dd u\]\right|\F_t\],
\ \ t\in(t_{n-1}, t_{n}],
\ea
with the convention $V_1(t_n) =0$ for $0\leq n \leq N-1$.
\vspace{1ex}\item The valuation adjustment for the future smile risk is defined as
\ba
\label{eq:va_future_smile_process}
V_2(t) = \e^{\q}\[\left.\sum_{n=1}^{N-1}e^{-r(t_n - t)}
\[\tilde{V}_n(X_n, {\blue \xi_n}) - V_{\bs}(t_n, X_n, v) \]\1_{\{t \leq t_n\}} \right|\F_t\].
\ea
\vspace{1ex}\item
Let $a_n^{\star}$ be the optimal policy at $t_n$ of \eqref{eq:Bellman_eq}; 
see Assumption \ref{assum:optimal_policy}. 
Suppose $\tilde{a}_n$ satisfies $\tilde{V}_n(x,\xi_n) = J_n\(X_n, \tilde{a}_n, \xi_n\)$.
The valuation adjustment for suboptimal withdrawal risk is defined as
\ba
\label{eq:va_suboptimal_process}
V_3(t) &=& \e^{\q} \[\left.\sum_{n=1}^{N-1}e^{-r(t_n - t)}
\[ J_n\(X_n, {\blue a_n^{\star}}, \xi_n\) - J_{n}\(X_n, {\blue \tilde{a}_n}, \xi_n\)\]\1_{\{t \leq t_n\}} \right|\F_t\].
\ea
\end{enumerate}
\end{definition}

\begin{lemma}
The valuation adjustment for the future smile risk satisfies the following recursion system:
\ba
\label{eq:va_future_smile}
V_2(t_n) = 
\begin{cases}
0, & n=N,\\
\tilde{V}_n(X_n, {\blue \xi_n}) - V_{\bs}(t_n, X_n, {\blue v}) + 
C_{2,n}\(K(X_n, a_n^{\star})\),&1\leq n \leq N-1,
\end{cases}
\ea
where
\ba
\label{eq:exp_future_smile}
C_{2, n}(x) = \e_{n,x}^{\q}[e^{-r\delta}V_2(t_{n+1})].
\ea
Furthermore,
\ba
\label{eq:va_suboptimal_withdrawal}
V_3(t_n) = 
\begin{cases}
0, & n=N,\\
J_n\(X_n, {\blue a_n^{\star}}, \xi_n\) - J_{n}\(X_n, {\blue \tilde{a}_n}, \xi_n\)
+C_{3,n}\(K(X_n, a_n^{\star})\),&1\leq n \leq N-1,
\end{cases}
\ea
where
\ba
\label{eq:exp_suboptimal_withdrawal}
C_{3, n}(x) = \e_{n,x}^{\q}[e^{-r\delta}V_3(t_{n+1})].
\ea
\end{lemma}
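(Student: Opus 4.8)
The plan is to read off both recursions directly from the closed forms \eqref{eq:va_future_smile_process}--\eqref{eq:va_suboptimal_process} of Definition \ref{def:va_adj}, using only the tower property of conditional expectation together with the $\F_{t_n}$-measurability of the period-$t_n$ summands. Since the two asserted identities have the same shape, I would treat $V_2$ and $V_3$ in parallel: for $i\in\{2,3\}$ write
\bas
D_n^{(2)} &:=& \tilde{V}_n(X_n,\xi_n) - V_{\bs}(t_n,X_n,v),\\
D_n^{(3)} &:=& J_n(X_n,a_n^{\star},\xi_n) - J_n(X_n,\tilde{a}_n,\xi_n),
\eas
so that, by \eqref{eq:va_future_smile_process}--\eqref{eq:va_suboptimal_process}, $V_i(t) = \e^{\q}\big[\sum_{n=1}^{N-1} e^{-r(t_n-t)} D_n^{(i)}\,\1_{\{t\leq t_n\}}\,\big|\,\F_t\big]$ for both $i$.

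The base case $n=N$ is immediate: every indicator $\1_{\{t_N\leq t_k\}}$, $1\leq k\leq N-1$, vanishes, so $V_2(t_N)=V_3(t_N)=0$; in particular $C_{i,N-1}(\cdot)=\e_{N-1,\cdot}^{\q}[e^{-r\delta}V_i(t_N)]=0$, which also makes the recursions consistent at $n=N-1$. For $1\leq n\leq N-1$, evaluate at $t=t_n$, where $\1_{\{t_n\leq t_k\}}=\1_{\{k\geq n\}}$, and peel off the $k=n$ term:
\bas
V_i(t_n) &=& \e^{\q}\Big[\,D_n^{(i)} + e^{-r\delta}\sum_{k=n+1}^{N-1} e^{-r(t_k-t_{n+1})} D_k^{(i)}\;\Big|\;\F_{t_n}\Big].
\eas
Now $D_n^{(i)}$ is $\F_{t_n}$-measurable — this is the place where I must invoke that $\xi_n$ is $\F_{t_n}$-measurable (Definition \ref{def:imp_var}, in the spirit of the hypothesis $\xi_1\in\F_{t_1}$ in Theorem \ref{thm:risk_decompose}) and that the optimizers $a_n^{\star}$ (Assumption \ref{assum:optimal_policy}) and $\tilde{a}_n$ are admissible, hence $\F_{t_n}$-measurable — so it leaves the conditional expectation. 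By \eqref{eq:va_future_smile_process}--\eqref{eq:va_suboptimal_process} applied at $t=t_{n+1}$, the remaining sum coincides with $V_i(t_{n+1})$ once conditioned on $\F_{t_{n+1}}$, so the tower property gives $\e^{\q}\big[\sum_{k=n+1}^{N-1} e^{-r(t_k-t_{n+1})} D_k^{(i)}\mid\F_{t_n}\big] = \e^{\q}\big[V_i(t_{n+1})\mid\F_{t_n}\big]$.

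It then remains to identify $\e^{\q}[e^{-r\delta}V_i(t_{n+1})\mid\F_{t_n}]$ with $C_{i,n}\big(K(X_n,a_n^{\star})\big)$. Because the value process $V$ is built from the Bellman equation \eqref{eq:Bellman_eq} with the supremum attained at $a_n^{\star}$, the realized post-withdrawal state is $X_{n^+}=K(X_n,a_n^{\star})$, which is $\F_{t_n}$-measurable; hence $\mathcal{G}_n^{x}\big|_{x=K(X_n,a_n^{\star})}=\sigma(\{X_{n^+}=K(X_n,a_n^{\star})\}\cup\F_{t_n})$ agrees with $\F_{t_n}$ up to $\q$-null sets, and the standard freezing argument yields $\e^{\q}[e^{-r\delta}V_i(t_{n+1})\mid\F_{t_n}]=C_{i,n}(x)\big|_{x=X_{n^+}}=C_{i,n}\big(K(X_n,a_n^{\star})\big)$ by \eqref{eq:exp_future_smile} and \eqref{eq:exp_suboptimal_withdrawal}. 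Combining this with the preceding display gives $V_i(t_n)=D_n^{(i)}+C_{i,n}\big(K(X_n,a_n^{\star})\big)$, which is exactly \eqref{eq:va_future_smile} for $i=2$ and \eqref{eq:va_suboptimal_withdrawal} for $i=3$. The one genuinely delicate point is this last identification: one must be sure that $\e_{n,x}^{\q}$, which freezes the post-withdrawal state at $x$, collapses to ordinary conditioning on $\F_{t_n}$ once $x$ is set to its realized value $K(X_n,a_n^{\star})$ — and this hinges on the $\F_{t_n}$-measurability of $a_n^{\star}$ (and of $\tilde{a}_n$, $\xi_n$), which is provided by admissibility together with Assumption \ref{assum:optimal_policy}. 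Everything else is routine bookkeeping with the indicators and iterated conditioning.
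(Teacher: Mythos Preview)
Your proof is correct and is essentially a detailed write-up of exactly what the paper does: the paper's own proof consists of a single sentence stating that the lemma follows from a straightforward application of the tower property of conditional expectation, and your argument (peel off the $\F_{t_n}$-measurable summand $D_n^{(i)}$, then tower through $\F_{t_{n+1}}$ and identify the residual expectation with $C_{i,n}$ evaluated at the realized post-withdrawal state) is precisely that. Your explicit check that $\e^{\q}[\,\cdot\,|\F_{t_n}]$ coincides with $\e_{n,x}^{\q}[\,\cdot\,]\big|_{x=K(X_n,a_n^{\star})}$ via the $\F_{t_n}$-measurability of $a_n^{\star}$ is the one nontrivial bookkeeping point, and the paper leaves it implicit.
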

\begin{proof}
The conclusion follows from a straightforward application of the Tower property of the conditional expectation.
\end{proof}

\begin{lemma}
\label{lemma:va_eq}
\ba
\label{eq:20220423}
V(t_n) = V_{\bs}(t_n, X_n, v) + V_2(t_n) + V_3(t_n),
\ \ 1 \leq n \leq N-1,
\ea
where $V_2(t_n)$ and $V_3(t_n)$ are given by Eqs. \eqref{eq:va_future_smile_process} and \eqref{eq:va_suboptimal_process} respectively.
\end{lemma}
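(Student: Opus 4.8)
The plan is to prove Eq.~\eqref{eq:20220423} by backward induction on $n$, starting from $n=N-1$ and descending to $n=1$, using the recursion systems \eqref{eq:va_future_smile} and \eqref{eq:va_suboptimal_withdrawal} for $V_2$ and $V_3$ together with the Bellman equations \eqref{eq:Bellman_eq} and \eqref{eq:Bellman_eq_BS}. The key algebraic identity to exploit is that, by Definition~\ref{def:imp_var} and Definition~\ref{def:quasi_BS_post_withdrawal}, the conditional expectation $\e_{n,x}^{\q}[e^{-r\delta}V_{\bs}(t_{n+1},X_{n+1},v)]$ coincides with $\tilde C_n(x,\xi_n)$, so that $\xi_n$ exactly absorbs the one-period realized-volatility and smile discrepancy into the quasi-BS value function $\tilde V_n(\cdot,\xi_n)$.

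\textbf{Inductive step.} Suppose \eqref{eq:20220423} holds at $t_{n+1}$ (the base case $V(t_N)=g(X_N)=V_{\bs}(t_N,X_N,v)$ with $V_2(t_N)=V_3(t_N)=0$ being immediate). First I would apply \eqref{eq:post_withdrawal_value} and the induction hypothesis to write
\bas
C_n(x) = \e_{n,x}^{\q}\[e^{-r\delta}\(V_{\bs}(t_{n+1},X_{n+1},v) + V_2(t_{n+1}) + V_3(t_{n+1})\)\]
= \tilde C_n(x,\xi_n) + C_{2,n}(x) + C_{3,n}(x),
\eas
using the definition of $\xi_n$ for the first summand and \eqref{eq:exp_future_smile}--\eqref{eq:exp_suboptimal_withdrawal} for the other two. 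Then, plugging this into the Bellman equation \eqref{eq:Bellman_eq} and recalling $J_n$ from \eqref{eq:obj_BS}, and that $C_{2,n},C_{3,n}$ do not depend on the control through $K$ beyond the point $K(X_n,a)$, the supremum over $a$ of $f_n(X_n,a)+\tilde C_n(K(X_n,a),\xi_n)$ is attained at $\tilde a_n$ and equals $\tilde V_n(X_n,\xi_n)$; meanwhile the true optimizer $a_n^{\star}$ realizes the sum $J_n(X_n,a_n^\star,\xi_n)+C_{2,n}(K(X_n,a_n^\star))+C_{3,n}(K(X_n,a_n^\star))$. The difference between the value at $a_n^\star$ and $\tilde V_n(X_n,\xi_n)$ is precisely the one-period suboptimality gap $J_n(X_n,a_n^\star,\xi_n)-J_n(X_n,\tilde a_n,\xi_n)$; adding and subtracting $V_{\bs}(t_n,X_n,v)$ and regrouping according to \eqref{eq:va_future_smile} and \eqref{eq:va_suboptimal_withdrawal} yields $V(t_n)=V_{\bs}(t_n,X_n,v)+V_2(t_n)+V_3(t_n)$, closing the induction.

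\textbf{Main obstacle.} The delicate point is the bookkeeping of the optimizers: the $C_{2,n}$ and $C_{3,n}$ terms are evaluated at the \emph{true} post-withdrawal state $K(X_n,a_n^\star)$ in the definitions \eqref{eq:va_future_smile}--\eqref{eq:va_suboptimal_withdrawal}, whereas the quasi-BS value $\tilde V_n(X_n,\xi_n)$ is maximized at $\tilde a_n$, so one must be careful that the decomposition is organized around $a_n^\star$ throughout and that the ``missing'' contributions to $V_2$ and $V_3$ along the path governed by $a_n^\star$ are exactly what the $\1_{\{t\le t_n\}}$-indexed sums in \eqref{eq:va_future_smile_process}--\eqref{eq:va_suboptimal_process} and their Tower-property recursions encode. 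A secondary technical care is ensuring the conditional expectations are well-defined and can be exchanged freely, which is guaranteed by the integrability Assumption and by Assumption~\ref{assum:optimal_policy} on the existence of $a_n^\star$; these I would invoke without further comment. Finally I would cite Lemma~\ref{lemma:20220423} to identify $\tilde V_n(\cdot,v)=V_{\bs}(t_n,\cdot,v)$ wherever the variance parameter is the frozen $v$, keeping the two roles of $\tilde V_n$ (at $\xi_n$ versus at $v$) clearly separated.
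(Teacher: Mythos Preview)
Your proposal is correct and follows essentially the same route as the paper: backward induction using the recursions \eqref{eq:va_future_smile}--\eqref{eq:va_suboptimal_withdrawal}, the identification $\e_{n,x}^{\q}[e^{-r\delta}V_{\bs}(t_{n+1},X_{n+1},v)]=\tilde C_n(x,\xi_n)$ from Definition~\ref{def:imp_var}, and then the add-and-subtract of $\tilde V_n(X_n,\xi_n)$ and $V_{\bs}(t_n,X_n,v)$ around the true optimizer $a_n^\star$. The only cosmetic difference is that the paper treats $n=N-1$ as an explicit base case (where $C_{2,N-1}=C_{3,N-1}=0$ forces $a_{N-1}^\star=\tilde a_{N-1}$ and $V_3(t_{N-1})=0$), whereas you absorb it into the general step starting from the trivial identity at $t_N$; both are valid.
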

\begin{proof}
We prove the lemma via an induction argument. 
\begin{enumerate}
\vspace{1ex}    \item
\begin{enumerate}
\item By Definition \ref{def:imp_var} and Eq. \eqref{eq:post_withdrawal_value}, we have $C_{N-1}(x) = \tilde{C}_{N-1}(x, \xi_{N-1})$.
Plugging this into Eq. \eqref{eq:Bellman_eq} implies that
\bas
V(t_{N-1}) &=& \sup_{a\in A_{N-1}(X_{N-1})}\left[f_{N-1}\(X_{N-1}, a\) + C_{N-1}\(K\(X_{N-1}, a\)\)\right]\\
&=& \sup_{a\in A_{N-1}(X_{N-1})}\[f_{N-1}\(X_{N-1}, a\) + \tilde{C}_{N-1}\(K\(X_{N-1}, a\), {\blue\xi_{N-1}}\)\]\\
&=& \sup_{a\in A_{N-1}(X_{N-1})}J_{N-1}\(X_{N-1}, a, \xi_{N-1}\)\\
&=& \tilde{V}_{N-1}(X_{N-1}, {\blue\xi_{N-1}}),
\eas
where the last equality is by Eq. \eqref{eq:quasi_BS_value_fun}
and thus $a_{N-1}^{\star} = \tilde{a}_{N-1}$. 

\item In view of Eqs. \eqref{eq:va_future_smile} and \eqref{eq:exp_future_smile}, we get
\bas
V_2(t_{N-1}) = \tilde{V}_{N-1}(X_{N-1}, {\blue\xi_{N-1}}) - V_{\bs}(t_{N-1}, X_{N-1}, {\blue v}).
\eas

\item Furthermore, combining Eqs. \eqref{eq:va_suboptimal_withdrawal} and \eqref{eq:exp_suboptimal_withdrawal} implies
that $V_3(t_{N-1})=0$.
\end{enumerate}
Putting (a)-(c) together implies Eq. \eqref{eq:20220423} holds for $n=N-1$.

\vspace{1ex}\item
Now by induction we assume Eq. \eqref{eq:20220423} holds for $n$.
Plugging Eq. \eqref{eq:20220423} into Eq. \eqref{eq:post_withdrawal_value} gives
\bas
C_{n-1}(x) &=&\e_{n-1,x}^{\q} \[e^{-r\delta} V_{\bs}(t_{n}, X_n, v)\]
+ e^{-r\delta}\e_{n-1,x}^{\q} \[ V_2(t_n) + V_3(t_n)\]\\
&=& \tilde{C}_{n-1}\(K(x,a), {\blue \xi_{n-1}}\) + C_{2,n-1}\(K(x,a)\) + C_{3,n-1}\(K(x, a)\)
\eas
where the last equality follows by 
Eqs. \eqref{eq:imp_var}, \eqref{eq:exp_future_smile} and \eqref{eq:exp_suboptimal_withdrawal}.

The above display in conjunction with Eq. \eqref{eq:Bellman_eq} gives
\bas
V(t_{n-1}) &=& f_{n-1}\(X_{n-1}, a_{n-1}^{\star}\) + \tilde{C}_{n-1}\(K\(X_{n-1},a_{n-1}^{\star}\), \xi_{n-1}\) \\
&&+\sum_{j=2}^3 C_{j,n}\(K\(X_{n-1},a_{n-1}^{\star}\)\)\\
&=& J_n\(X_{n-1}, {\blue a_{n-1}^{\star}}, \xi_{n-1}\) +\sum_{j=2}^3 C_{j,n}\(K\(X_{n-1},a_{n-1}^{\star}\)\)\\
&=& \tilde{V}_{n-1}\(X_{n-1}, {\blue v}\) \\
&& + \underbrace{\tilde{V}_{n-1}\( X_{n-1}, {\blue \xi_{n-1}}\) -  V_{\bs}\(t_{n-1}, X_{n-1}, {\blue v}\) + C_{2,n}\(K\(X_{n-1},a_{n-1}^{\star}\)\)}_{=V_2(t_{n-1})}\\\
&&+ \underbrace{J_n\(X_{n-1}, {\blue a_{n-1}^{\star}}, \xi_{n-1}\) - J_{n}\(X_{n-1}, {\blue \tilde{a}_{n-1}}, \xi_{n-1}\)
+ C_{3,n}\(K\(X_{n-1},a_{n-1}^{\star}\)\)}_{=V_3(t_{n-1})}\\
&=& V_{\bs}\(t_{n-1}, X_{n-1}\) + V_2(t_{n-1}) + V_3(t_{n-1}),
\eas
where the second equality is by Eq. \eqref{eq:obj_BS} 
and the last equality follows by Eqs. \eqref{eq:va_future_smile} and \eqref{eq:va_suboptimal_withdrawal} 
in together with Lemma \ref{lemma:20220423}.
This proves Eq. \eqref{eq:20220423} for $n-1$. 
\end{enumerate}
\end{proof}
\subsection{Proof of Theorem \ref{thm:va_multi_period}}
\begin{proof}
Recall from Lemma \ref{lemma:20220423} that
\bas
V(t_n) = V_{\bs}(t_n, X_n, v)  + V_2(t_n) + V_3(t_n),
\ \ 1 \leq n \leq N-1.
\eas
Plugging the above display into Eq. \eqref{eq:martingale_eq} gives
\ba
\nonumber
V(t) &=& \e^{\q}\[\left. e^{-r(t_n - t)}V_{\bs}(t_n, X_n, v)\right| \F_t\]
+ \sum_{j=2}^3\e^{\q}\[\left. e^{-r(t_n - t)}V_j(t_n)\right| \F_t\]\\
\label{eq:20220508}
&=& \e^{\q}\[\left. e^{-r(t_n - t)}V_{\bs}(t_n, X_n, v)\right| \F_t\] 
+ \sum_{j=2}^3 V_j(t),
\ea
where the last equality follows by the Tower property; see Eqs. \eqref{eq:va_future_smile_process}
and \eqref{eq:va_suboptimal_process}.

On the other hand, applying It\^{o}'s lemma to $e^{-r(t_n - t)}V_{\bs}(t_n, X_n, v)$ gives
\bas
e^{-r(t_n - t)}V_{\bs}(t_n, X_n, v) - V_{\bs}(t, X(t), v) 
&=& \int_{t}^{t_n} e^{-r(u-t)}\[\partial_t V_{\bs}-rV_{\bs}\]\dd u\\
&&+\int_{t}^{t_n} e^{-r(u-t)}\Delta_{\bs}(u) \dd X(u)\\
&&+ \int_{t}^{t_n} e^{-r(u-t)}\frac{1}{2}\partial_{x}^2V_{\bs} \dd [X](u),
\eas
for $u \in [t, t_n]$.
Recall that $V_{\bs}$ satisfies the BS equation:
\bas
\partial_t V_{\bs} + rx \partial_x V_{\bs} + \frac{v}{2}x^2\partial^2_{xx} V_{\bs}- rV_{\bs} = 0.
\eas
Hence,
\bas
e^{-r(t_n - t)}V_{\bs}(t_n, X_n, v) - V_{\bs}(t, X(t), v) 
&=& \int_{t}^{t_n} e^{-r(u-t)}\partial_x V_{\bs} \[\dd X(u) - r X(u) \dd u\]\\
&&+ \int_{t}^{t_n} e^{-r(u-t)}\frac{1}{2}\partial_{x}^2 V_{\bs} \[\dd [X](u) - vX^2(u)\dd u\].
\eas
Taking expecations on both sides of the above equation gives
\bas
\e^{\q}\[\left. e^{-r(t_n - t)}V_{\bs}(t_n, X_n, v)\right| \F_t\] 
&=& V_{\bs}(t, X(t), v) + V_{1}(t) \\
&&+ \e^{\q}\[\left.\int_{t}^{t_n} e^{-r(u-t)}\partial_x V_{\bs} \[\dd X(u) - r X(u) \dd u\]\right|\F_t\]\\
&=& V_{\bs}(t, X(t), v) + V_{1}(t),
\eas
where the last equality follows by Eq. \eqref{eq:martingale}.

The above display in conjunction with Eq. \eqref{eq:20220508} proves Theorem \ref{thm:va_multi_period}.
\end{proof}

\vspace{0.8cm}
\newpage

\bibliographystyle{plain}
\bibliography{ref}

\begin{thebibliography}{10}

\bibitem{Andersen2010}
Leif~BG Andersen and Vladimir~V Piterbarg.
\newblock {\em Interest Rate Modeling Volume III: Products and Risk
  Management}.
\newblock Atlantic Financial Press, 2010.

\bibitem{Azimzadeh2015}
Parsiad Azimzadeh and Peter~A Forsyth.
\newblock The existence of optimal bang-bang controls for gmxb contracts.
\newblock {\em SIAM Journal on Financial Mathematics}, 6(1):117--139, 2015.

\bibitem{Bergomi2015}
Lorenzo Bergomi.
\newblock {\em Stochastic volatility modeling}.
\newblock CRC press, 2015.

\bibitem{Broadie1996}
Mark Broadie and Paul Glasserman.
\newblock Estimating security price derivatives using simulation.
\newblock {\em Management science}, 42(2):269--285, 1996.

\bibitem{Carr2001}
Peter Carr and Dilip Madan.
\newblock Towards a theory of volatility trading.
\newblock {\em Option Pricing, Interest Rates and Risk Management, Handbooks in
  Mathematical Finance}, 22(7):458--476, 2001.

\bibitem{Chen2008}
Zhang Chen, Ken Vetzal, and Peter~A Forsyth.
\newblock The effect of modelling parameters on the value of gmwb guarantees.
\newblock {\em Insurance: Mathematics and Economics}, 43(1):165--173, 2008.

\bibitem{BIS2021}
Bank for International~Settlements.
\newblock Otc derivatives outstanding.
\newblock \url{https://www.bis.org/statistics/derstats.htm?m=2071}.
\newblock Accessed: 2022-05-11.

\bibitem{Gatheral2011}
Jim Gatheral.
\newblock {\em The volatility surface: a practitioner's guide}.
\newblock John Wiley \& Sons, 2011.

\bibitem{Huang2016}
Yao~Tung Huang and Yue~Kuen Kwok.
\newblock Regression-based monte carlo methods for stochastic control models:
  Variable annuities with lifelong guarantees.
\newblock {\em Quantitative Finance}, 16(6):905--928, 2016.

\bibitem{Huang2017}
Yao~Tung Huang, Pingping Zeng, and Yue~Kuen Kwok.
\newblock Optimal initiation of guaranteed lifelong withdrawal benefit with
  dynamic withdrawals.
\newblock {\em SIAM Journal on Financial Mathematics}, 8(1):804--840, 2017.

\bibitem{IRI2021}
Insured~Retirement Institute.
\newblock 2021 fixed and variable annuity sales and asset data.
\newblock
  \url{https://www.irionline.org/wp-content/uploads/2022/04/Q4-2021-Annuity-Data.pdf}.
\newblock Accessed: 2022-05-11.

\bibitem{Knoller2016}
Christian Knoller, Gunther Kraut, and Pascal Schoenmaekers.
\newblock On the propensity to surrender a variable annuity contract: an
  empirical analysis of dynamic policyholder behavior.
\newblock {\em Journal of Risk and Insurance}, 83(4):979--1006, 2016.

\bibitem{Moenig2016}
Thorsten Moenig and Daniel Bauer.
\newblock Revisiting the risk-neutral approach to optimal policyholder
  behavior: A study of withdrawal guarantees in variable annuities.
\newblock {\em Review of Finance}, 20(2):759--794, 2016.

\bibitem{Shen2020}
Zhiyi Shen and Chengguo Weng.
\newblock Pricing bounds and bang-bang analysis of the polaris variable
  annuities.
\newblock {\em Quantitative Finance}, 20(1):147--171, 2020.

\end{thebibliography}

\vspace{0.4cm}

\end{document}